\documentclass[letterpaper,twocolumn,10pt]{article}
\usepackage{usenix2025_SOUPS}

\usepackage{tikz}
\usepackage{amsmath}
\usepackage{amssymb,amsthm}
\usepackage{graphicx}
\usepackage{caption}
\usepackage{subcaption}
\usepackage{booktabs}
\usepackage{enumerate} 
\usepackage{enumitem} 
\usepackage{multirow} 
\usepackage{comment}
\usepackage{balance}
\usepackage{stfloats}
\usepackage{breakurl}
\usepackage{xurl}
\usepackage{authblk}


\newcommand{\mypara}[1]{\smallskip\noindent\textbf{#1:}}

\newtheorem{prop}{Proposition}

\begin{document}
\pagenumbering{gobble} 
\date{}

\title{\Large \bf Design and Evaluation of Privacy-Preserving Protocols for \\ Agent-Facilitated 
Mobile Money Services in Kenya}

\def\plainauthor{Author name(s) for PDF metadata. Don't forget to anonymize for submission!}


\author[1]{
{\rm Karen Sowon}} 
\author[2]{
{\rm Collins W. Munyendo}}
\author[3]{
{\rm Lily Klucinec}}
\author[4]{
{\rm Eunice Maingi}}
\author[4]{
{\rm Gerald Suleh}}
\author[3]{
{\rm \\ Lorrie Faith Cranor}}
\author[3]{
{\rm Giulia Fanti}}
\author[5]{
{\rm Conrad Tucker}}
\author[5]{
{\rm Assane Gueye}}

\affil[1]{\emph{Indiana University}}
\affil[2]{\emph{The George Washington University}}
\affil[3]{\emph{Carnegie Mellon University}}
\affil[4]{\emph{Strathmore University}}
\affil[5]{\emph{Carnegie Mellon University-Africa}}

\maketitle
\thecopyright

\begin{abstract} 
Mobile Money (MoMo), a technology that allows users to complete financial transactions using a mobile phone without requiring a bank account, is a common method for processing financial transactions in Africa and other developing regions. Users can deposit 
and withdraw money with the help of human agents. 
During deposit and withdraw operations, know-your-customer (KYC) processes require agents to access and verify customer information such as name and ID number, which can introduce privacy and security risks. In this work, we design alternative protocols for MoMo deposits/withdrawals that protect users' privacy while enabling KYC checks by redirecting the flow of sensitive information from the agent to the MoMo provider. 
We evaluate the usability and efficiency of our proposed protocols in a role-play and semi-structured interview study with 32 users and 15 agents in Kenya. We find that users and agents 
prefer the new protocols, due in part to convenient and efficient verification using biometrics as well as better data privacy and access control. However, our study also surfaced challenges that need to be addressed before these protocols can be deployed. 
\end{abstract}

\newcommand{\figurePrivacyDelegated}[0]{
    \begin{figure*}[h!]
    \centering
    \vspace{-2cm}
    \fbox{%
    \begin{minipage}{0.9\textwidth}
    \begin{subfigure}[b]{0.45\linewidth}
        \centering
        \includegraphics[height=9cm, width=8.4cm]{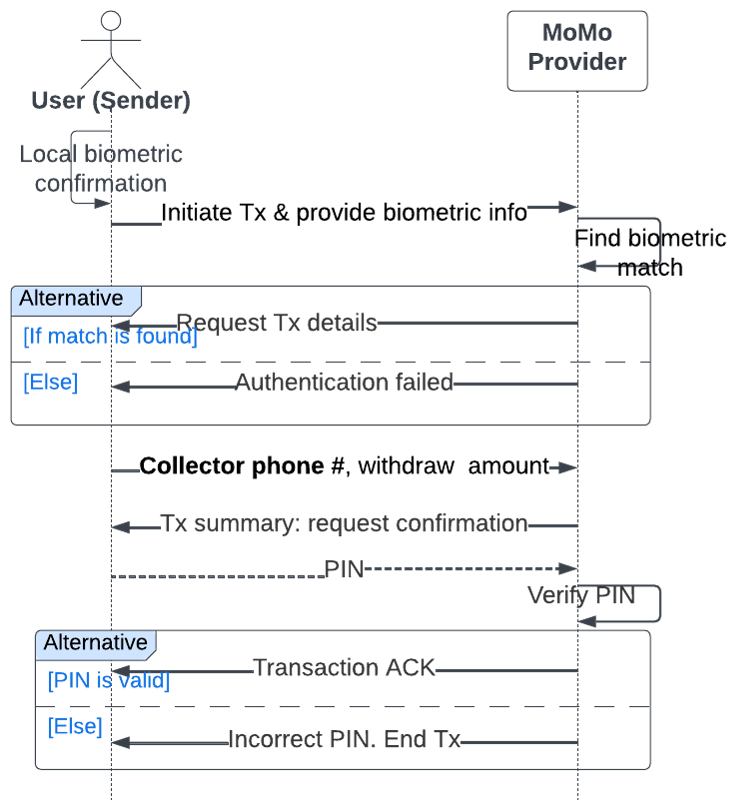}
        \caption{Sender's side}
        \label{fig:sender_side}
    \end{subfigure}
    \hfill
    \begin{subfigure}[b]{0.49\linewidth}
        \centering
        \includegraphics[width=\linewidth]{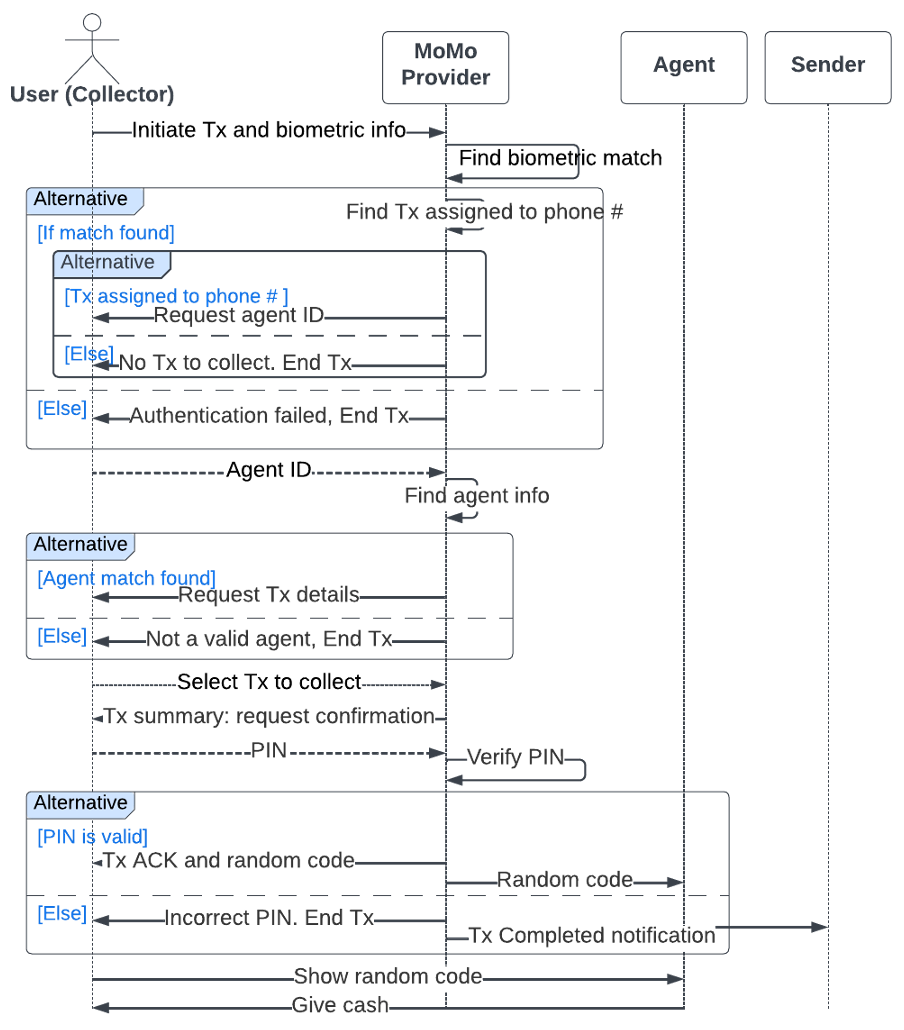}
        \caption{Collector's side}
        \label{fig:collector_side}
    \end{subfigure}
    \end{minipage}
    }
    \caption{Collector's and sender's side of the privacy-preserving delegated withdrawal 
    }
    \label{fig:delegated_withdraw}
\end{figure*}
}

\newcommand{\figureCashOut}[0]{
    \begin{figure*}[t]
    \centering
    \vspace{-1cm}
    \fbox{%
    \begin{minipage}{0.97\textwidth}
    \begin{subfigure}[b]{0.47\textwidth}
        \centering
        \includegraphics[height=8cm, width=8.9cm]{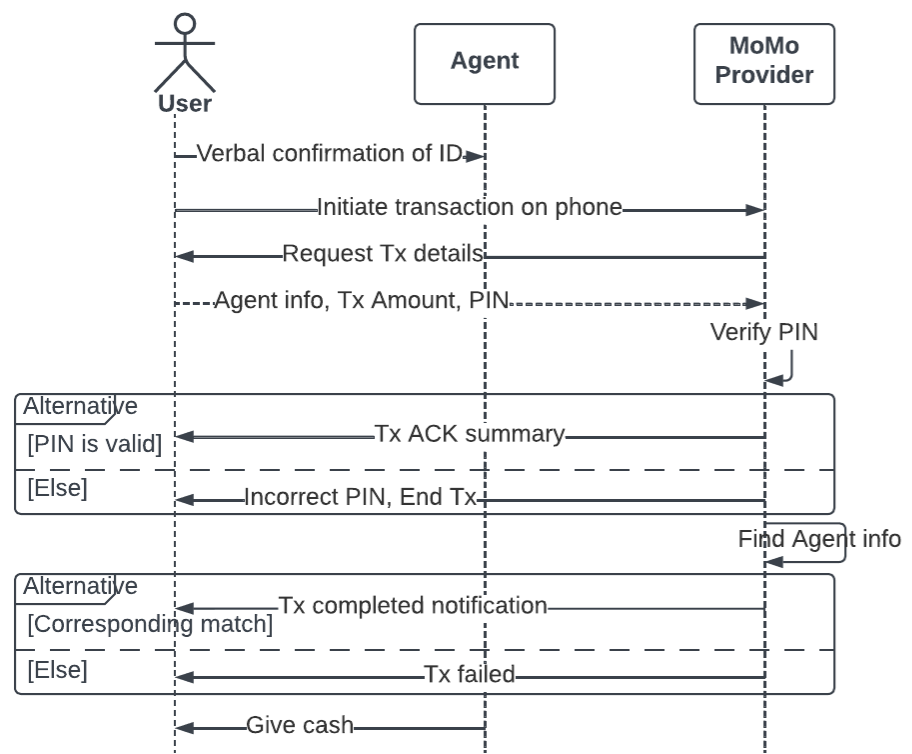}
        \caption{Current withdrawal process}
        \label{fig:curr_cashout}
    \end{subfigure}
    \hfill
    \begin{subfigure}[b]{0.50\textwidth}
        \centering
        \includegraphics[width=\linewidth]{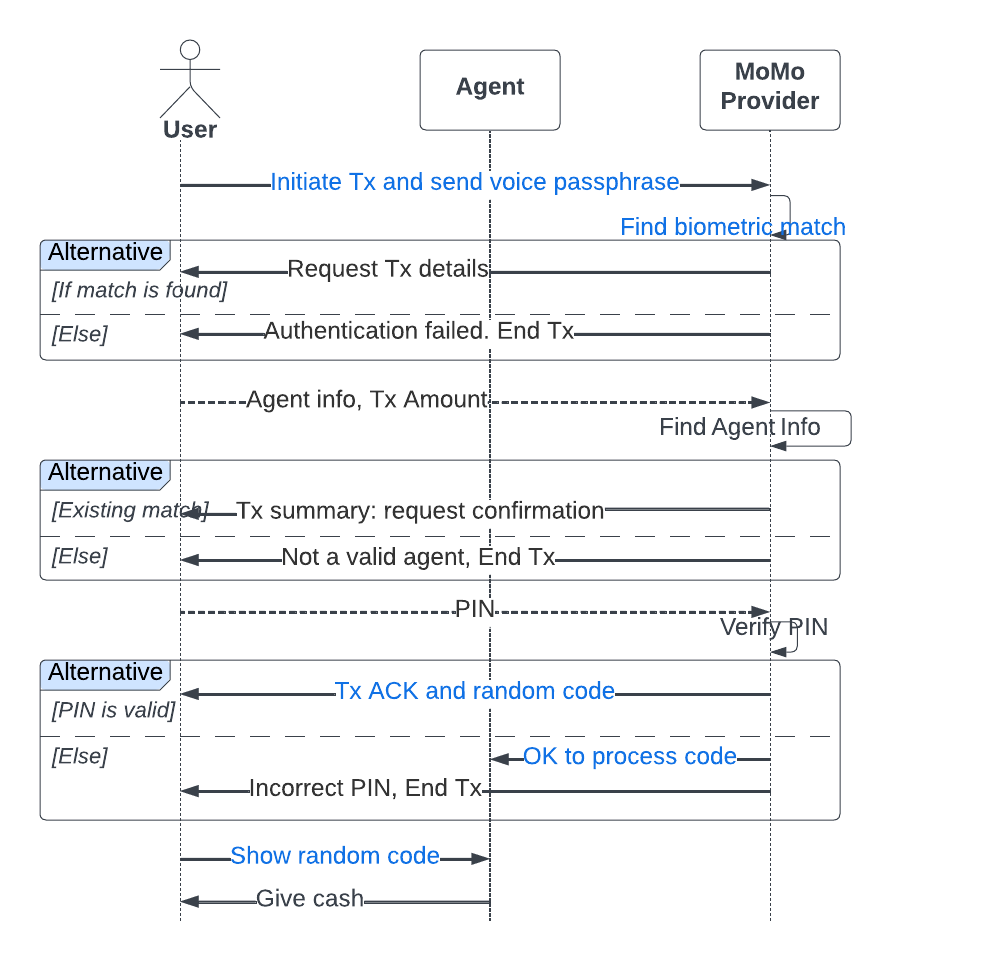}
        \caption{Privacy-preserving withdrawal process for feature phone users}
        \label{fig:priv_cashout}
    \end{subfigure}
    \end{minipage}
    }
    \caption{Withdrawal under the current protocol (left) and the proposed privacy-preserving protocol (right)}
    \label{fig:cashout}
\end{figure*}
}

\newcommand{\figurePrivacyDeposit}[0]{
\begin{figure}[h!]
\includegraphics[width=\linewidth]{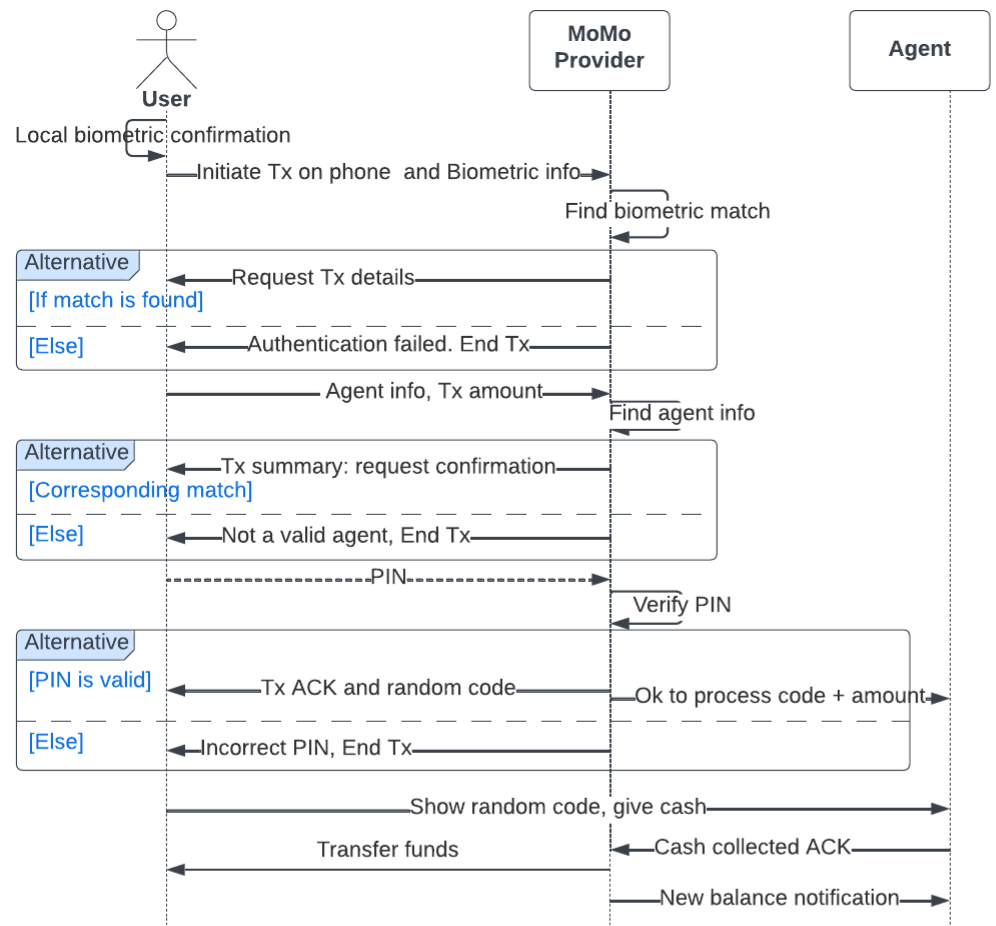}
\caption{Cashing-in with the privacy-preserving process for smart phone users}
\label{fig:priv_cashin}
\end{figure}
}

\newcommand{\figureDeposit}[0]{
    \begin{figure*}[h!]
    \centering
    \vspace{-1cm}
    \fbox{%
    \begin{minipage}{0.9\textwidth}
    \begin{subfigure}[b]{0.47\linewidth}
        \centering
        \includegraphics[width=\linewidth]{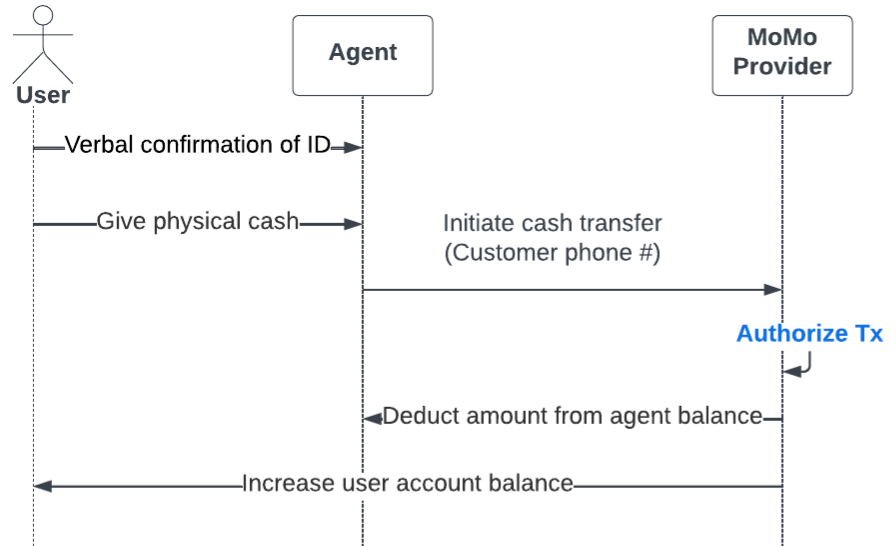}
        \caption{Current deposit}
        \label{fig:curr_dep}
    \end{subfigure}
    \hfill
    \begin{subfigure}[b]{0.47\linewidth}
        \centering
        \includegraphics[width=\linewidth]{Figures/PP-Deposit.png}
        \caption{Privacy-preserving deposit}
        \label{fig:priv_dep}
    \end{subfigure}
    \end{minipage}
    }
    \caption{Current and privacy-preserving techniques for depositing money}
    \label{fig:deposit}
\end{figure*}
}

\newcommand{\figureCurrentProxy}[0]{
\begin{figure}[t]
  \fbox{\includegraphics[width=0.97\linewidth,scale=0.5]{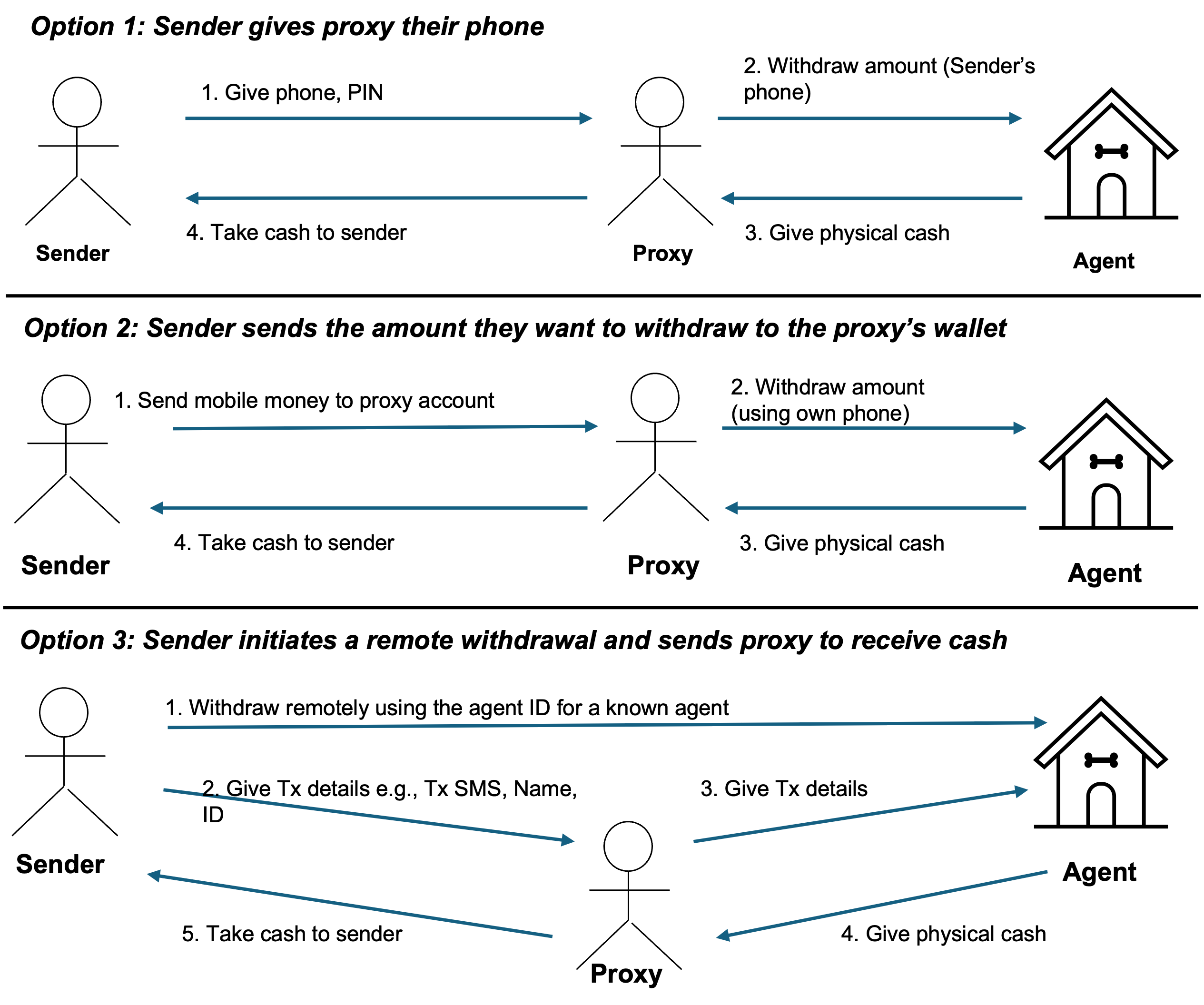}}
  \caption{Current workarounds for proxy withdrawal}
  \label{current_proxy}
\end{figure}
}

\section{Introduction}
Mobile money (MoMo) is a technology that allows mobile phone users to exchange and store money with their phones and associated phone numbers, without the need for a traditional financial account~\cite{donovan2012mobile,jack2011mobile}. MoMo has revolutionized digital transactions in many emerging economies~\cite{ahmad2020mobile}, e.g., sub-Saharan Africa. 
To safeguard MoMo transactions, regulations in most jurisdictions mandate that financial institutions acquire proof of customer identity before transacting, a process known as ``know your customer'' or KYC~\cite{nel2017know,mondal2016transaction,gelb2019identifying}, which primarily lies with telecommunications companies. 
However, authorized third-party agents acting as last-mile service providers often assist in the KYC process. Agents often operate out of convenience stores or kiosks within the local community. In some jurisdictions, KYC processes are done only at the time of registration---a move partly designed to avoid excluding people who lack formal identification~\cite{gsmaKYC,gsmamomo2024}. 
However, this may increase the risk of fraudulent transactions~\cite{gsmaMomoRegulatory, gsmamomo2024}. Other countries, including Kenya and Tanzania, require KYC with each transaction~\cite{gsmaKYC}.


The role agents play in KYC, especially per-transaction KYC, introduces privacy concerns; agents may access sensitive information such as the size and recipients of transactions, user habits, social links, and other personally-identifiable information (PII). Prior work shows that users are concerned about potentially malicious agents, including illegal use of customer data to register additional SIM cards~\cite{sowon2023role, thebftonlineSafeMomo}. 

In pursuit of privacy, some users adopt practices such as using SIM cards registered in another person's name~\cite{luhanga2023user}, creating challenges for KYC compliance. Privacy-preserving techniques such as verifiable credentials~\cite{lacity2022verifiable,mazzocca2024survey}, homomorphic encryption \cite{sasson2014zerocash} and Zero-Knowledge Proofs (ZKPs)~\cite{bunz2018bulletproofs,miers2013zerocoin} are impractical for less sophisticated devices, such as basic phones, that are widely used for MoMo services. 
Moreover,  MoMo users often want to conduct transactions remotely, in which case they often send friends or family to agents. This presents challenges with both KYC correctness and privacy.

\textbf{The goal of this study is to design and test alternative privacy-preserving transaction KYC protocols that are practical for MoMo contexts.} We focus on MoMo protocols used in Kenya, which already use a per-transaction KYC process and where there are over 77 million registered MoMo accounts~\cite{statistaKenyaRegistered}.  
Thus, we designed protocols allowing MoMo users to deposit (cash-in) and withdraw (cash-out), while minimizing the exposure of users' personal data to agents. We also include a protocol for KYC in delegated transactions.

Our new protocols redirect sensitive data flows from agents---who conduct KYC checks---to the MoMo provider; our findings 
suggest that users trust MoMo providers with sensitive data more than agents. 
Users of the new protocol authenticate themselves to a digital ID service using biometrics, either on the user's own smartphone or using already-deployed voice authentication tools for basic phones~\cite{Jitambul31:online}. Users then share the authentication certificate with the MoMo provider, which provides a one-time code to the agent and the user. This code is used to confirm identity and allows the agent to transfer or collect cash from the user. We also extend our protocols to delegated withdrawal, which is a common use case (e.g., a user sends their friend to withdraw cash at an agent). The main design challenges were to make the protocols usable and compatible with resource-constrained devices (i.e., basic phones), and efficient (i.e., minimizing communication costs).

After designing these alternative KYC protocols, we investigate the following research questions (RQs):
\begin{itemize}
\item \textbf{RQ1:} What are users' privacy perceptions and what factors influence their data sharing attitudes in the context of agent-facing mobile money transactions?
\item \textbf{RQ2:} What security, privacy, and usability factors influence user and agent preferences and concerns for our alternative, privacy-preserving protocols? 
\item \textbf{RQ3:} What other design considerations arise with the use of privacy-preserving protocols? 
\end{itemize}

\figureCashOut

We present the results of a role-play and semi-structured interview study conducted 
with 32 MoMo users and 15 agents in Kenya to test our privacy-preserving protocols. Expanding on prior work~\cite{sowon2023role}, we surface  concerns that users have when sharing  data with agents, including fraud and concerns for personal safety. We find that both users and agents prefer our proposed protocols because they address inconveniences with existing processes and offer better security and privacy. However, participants also highlight issues that need to be addressed before these protocols can be deployed.

\section{Background and Related Work}\label{sec:related}

We introduce the current MoMo processes, their privacy challenges, and our proposed solution. Then, we discuss related work on the security and privacy of MoMo and digital lending apps. 
 While previous studies~\cite{bowers2017regulators,bowers-emerging-digital,sowon2023role,munyendo-loans-2022,akgul2024decade} highlight security and privacy issues with 
 financial apps in the developing world, none study the privacy of 
 KYC practices. 
We design privacy-preserving KYC processes for MoMo and evaluate their usability with agents and users in Kenya.

\subsection{Current MoMo Processes}
MoMo is widely adopted by both banked and unbanked users, with more than 640 million users worldwide in 2023~\cite{Mobilemo67:online}. Users include both smartphone and basic phone owners, with a significant portion of the population using basic phones. In Kenya, the basic phone market share exceeds 40\%~\cite{malephane2022digital}.

MoMo transactions include deposits, withdrawals, person-to-person transfers~\cite{sowon2023role}, and person-to-merchant payments e.g., via the ``Pochi La Biashara'' service offered by Safaricom in Kenya~\cite{safaricom-pochi-2021}. Withdrawals can be done in multiple ways (e.g., at an agent, or bank to MoMo transfers), but our focus is on the processes involving an agent. In the current withdrawal process in Kenya (Figure \ref{fig:curr_cashout}), a user presents their ID to an agent 
before initiating the transaction on their phone. Typically, they will open the MoMo app, select the transaction type, enter the amount and the agent's number, and confirm the transaction by entering their MoMo PIN. If successful, two confirmation messages will be generated (Figure \ref{figUIs}), one for the agent and one for the user. Once the agent confirms the transaction, they record user details (e.g., ID number and name) and transaction details (e.g., amount, time, and transaction code) before handing the user the money. 


\begin{figure}[t]
  \centering
  \includegraphics[width=0.8\linewidth]{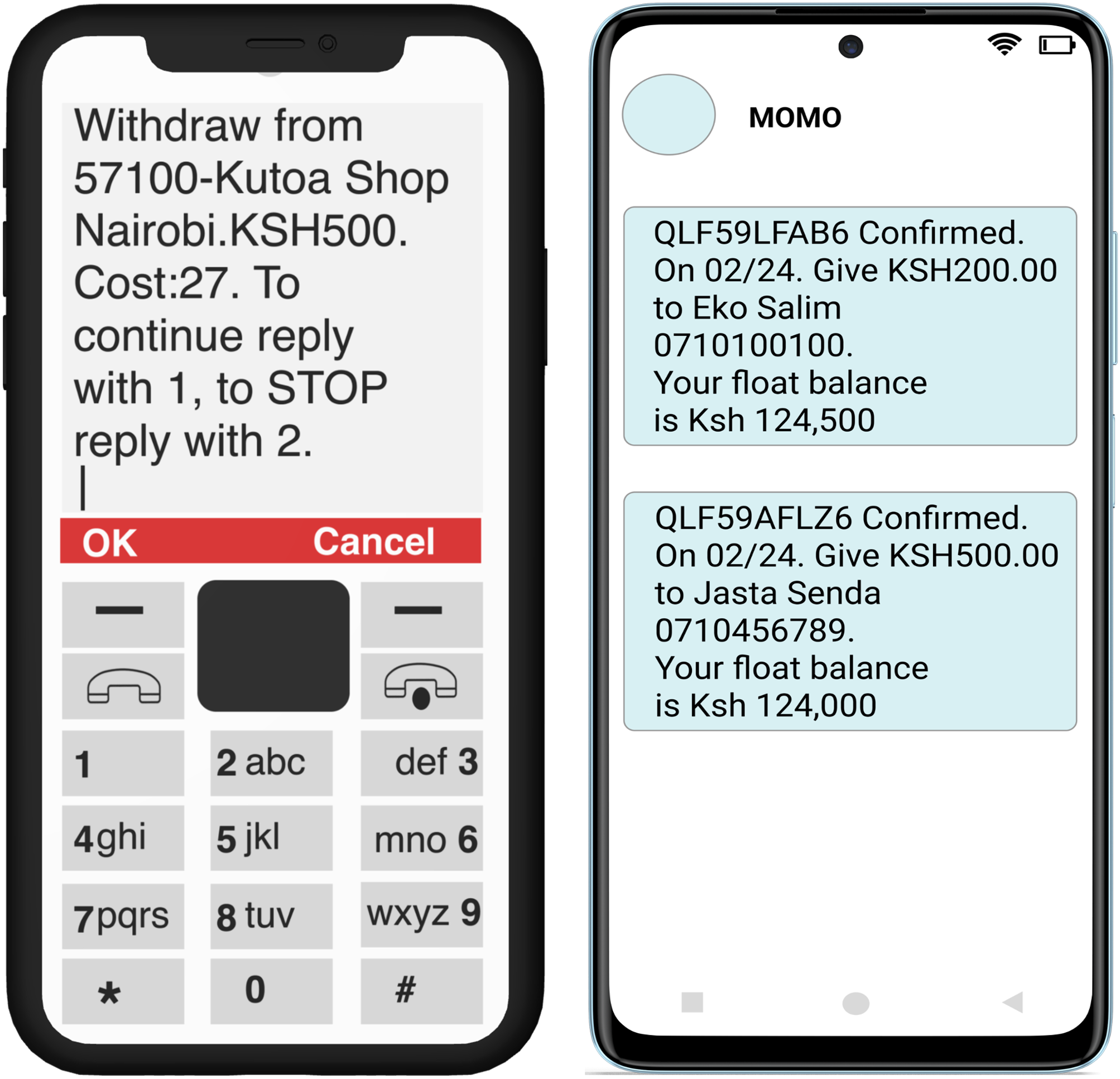}
  \caption{Sample MoMo UIs: Withdrawal acknowledgment on a basic phone (left) and messages for agents (right)}
  \label{figUIs}
\end{figure}

\mypara{Privacy challenges with current process} Previous work~\cite{sowon2023role} and our study show that many users worry that their PII recorded by agents can be misused under the current processes. 
We highlight some steps in the current process (Figure \ref{fig:curr_cashout}) that leak personal information, and what the risks are. In the KYC step, users submit their ID card to the agent to identify themselves. Agents are required to keep a record of all transactions that they facilitate in a physical book, including the ID number and name of each user. In addition to the ID card exposing many other pieces of data not needed for KYC, e.g. place of birth, cases have been reported of agents using users' IDs for illegal activities such as registering additional SIM cards---also mentioned by users in this study (see Section~\ref{sec:privacyperceptions}). Once the transaction is authorized, the agent receives a notification (Figure \ref{figUIs}) that contains more information such as the user's name and phone number. These details can be misused by agents for unsolicited phone calls or unauthorized data-sharing. 
Even when agents do not misuse information, they may not be able to store it securely. 

\mypara{Our solution} To protect users' personal information when withdrawing money, we designed an alternative, privacy-preserving process (see Figure \ref{fig:priv_cashout}). For this process, users do not need to hand their IDs to the agents nor share their personal information e.g., phone numbers with the agents. Instead, they authenticate using biometrics on their phones (fingerprint or facial for smartphones, and voice identification for basic phones). Afterward, users can initiate the withdrawal on their phones, generating a confirmation message with a unique code sent to both the user and the agent. Once the agent confirms the user received the same code, they can hand the money to the user. See Section \ref{sec:protocol} for more details.



\subsection{Security and Privacy of Mobile Money}
While MoMo services 
offer financial services to otherwise unbanked users in the developing world~\cite{donovan2012mobile,jack2011mobile}, there is limited work exploring the security and privacy of these services. 
Reaves et al.~\cite{reaves2017mo} uncovered 
vulnerabilities in 46 Android MoMo apps used in 28 countries, with many containing botched certificate validation and other forms of information leakage, leaving users vulnerable to impersonation and financial fraud. A related study of Android MoMo apps similarly found that most of these apps were not following security best practices~\cite{darvish2018security}. Through a systematic analysis of 197 digital financial services in Africa and South America followed by interviews with stakeholders, 
Castle et al.~\cite{castle2016let} found that although these apps were susceptible to various attack vectors, service providers were making efforts to secure them. 

Bowers et al.~\cite{bowers2017regulators} investigated 54 mobile money services in 32 countries in 2017, finding that almost half did not have any form of privacy policy. For those that did, these policies were hard to read, and often not in the language of their target users. This makes it difficult for users to understand what data is collected from them, how it is used, and how it is secured.

MoMo services use  human agents to facilitate transactions. 
However, the interaction between users and MoMo agents introduces security and privacy challenges~\cite{mogaji2022dark}. Through 72 semi-structured interviews in Kenya and Tanzania, Sowon et al.~\cite{sowon2023role} found that both users and agents design workarounds to the challenges posed by MoMo systems, including relying on their relationships for informal authentication. 
The study highlights the need to rethink the privacy and security of this ecosystem, along with usability. Accordingly, we design new privacy-preserving workflows for MoMo, and evaluate their efficacy and feasibility via role play and semi-structured interviews with both agents and users in Kenya.

\subsection{Emerging Digital Lending Apps}
Digital lending apps have emerged as a quick and easy way to obtain loans in the developing world, enhancing financial inclusion~\cite{faux_2020,hecht_2019,adams_2016}. This has been facilitated by rising smartphone usage and by MoMo services, which directly disburse loans to users' devices. However, these apps have also raised privacy concerns. 
Bowers et al.~\cite{bowers-emerging-digital} analyzed the privacy policies of 51 digital credit lenders and found they often failed to disclose data gathered by the apps. 
Munyendo et al.~\cite{munyendo-loans-2022} interviewed users of mobile loan apps in Kenya and learned about issues such as social shaming when users default in repayment. Similar concerns have been noted in India, sometimes driving loan app users to suicide~\cite{ramesh2022platform,saritha2023demystifying,ali2023fintech,aggarwal2024predatory}. Akgul et al.~\cite{akgul2024decade} analyzed reviews on the Google Play Store and found that these privacy concerns are widespread across many countries.  

\section{Protocol Design} \label{sec:protocol}
The objective of this work is to design a privacy-preserving KYC protocol for MoMo. 
The key design considerations were: 1) \emph{Correctness:} the protocol should correctly implement KYC for MoMo. 2) \emph{Privacy:} it should provide better privacy for users relative to the current process. We define privacy in terms of \emph{data minimization}: that is, minimizing the transfer of sensitive information to the agent. 3) \emph{Cost:} the protocol should minimize costs for users, including communication round trips to/from MoMo and transaction fees.
While some data minimization steps (e.g., not sharing PII with agents) may pose risks to agents who rely on user phone numbers to informally handle transaction issues (Section \ref{sec:security concerns}), this is actually a workaround in the current process. Transaction problems should be handled by the MoMo provider. In general, we find that our protocols close loopholes from workarounds and reduce agent liability, as data flows occur directly between the MoMo provider and the user.

Our design makes assumptions about trust in MoMo providers and availability of identity infrastructure. 

\mypara{Trust in MoMo providers} 
We assume that users trust the MoMo provider \emph{more} than individual agents, an assumption validated by some users in our study (see Section \ref{sec:privacyperceptions}).

\mypara{Availability of identity infrastructure} 
In line with prominent identity frameworks \cite{NIST2024digital}, we assume the existence of infrastructure for identity-proofing individuals by providing an ID card or other accepted documents for initial validation. 
We assume this infrastructure also enables digital authentication of users (e.g., using biometrics), which we believe is realistic given the efforts to pilot digital ID systems in Africa~\cite{DigitalIDAfrica}. Given the wide use of basic phones by MoMo users, we assume authentication infrastructure for both smartphone and basic phone users. For smartphone users, existing options include fingerprint or facial image recognition, while feature phone users can use voice biometrics. 
While photo ID authentication is still in its early days worldwide (e.g.,~\cite{singpass}),
Kenya (i.e., Safaricom) has already implemented a voice biometric system called \textit{Jitambulishe}, which allows users to identify themselves when they need to reset their PINs~\cite{Jitambul31:online}. 
The identity infrastructure could either be managed by the MoMo provider, 
a national digital ID service, or an approved third party. Our protocols are therefore robust enough to be used with solutions such as Verifiable Credentials (VCs) without any additional overhead or workflow changes. 

The protocols for smartphone users and basic phone users are exactly the same, with the only difference being the type of biometric used. In the following sub-sections, we illustrate only two privacy-preserving protocols for withdrawing: 1) withdrawing for basic phone users and 2) delegated withdrawing. The current (see Figure \ref{fig:curr_dep}) and corresponding privacy-preserving deposit (see Figure \ref{fig:priv_dep}) protocols are included on Figure \ref{fig:deposit} in the Appendix. 

\subsection{Privacy-Preserving Withdrawal}
Our privacy-preserving withdrawal protocol for basic phones is illustrated in Figure \ref{fig:priv_cashout}. 
The parts of the protocol that \emph{differ} from the current protocol (Figure \ref{fig:curr_cashout}) are indicated in blue.
In our protocol, a user initiates a transaction by supplying their biometric. 
Basic phone users will initialize a transaction and receive a phone call to authenticate with voice.
After successful authentication, the user proceeds by entering transaction details, such as the amount and the agent number for the store where they are transacting. The MoMo provider verifies that the agent number is valid and provides a transaction summary for the user to confirm on their phone by entering their PIN. Both the agent and user receive a confirmation SMS. Unlike the current process, where the agent SMS contains the user's name and phone number, the privacy-preserving protocol provides an $\ell$-bit code $c$ to both the customer and agent, which is valid for time $\Delta$ after it is generated. The agent also receives an amount associated with the code. After the agent verifies that their code and amount match those on the user's phone, the agent dispenses the specified amount of cash.

\mypara{Evaluation of Design Objectives}
(1) \emph{Correctness:} The new protocol may have KYC correctness similar to or better to the current protocol, as users are automatically authenticated by the MoMo provider, rather than by an agent. 
This can prevent fraudulent KYC practices discussed in prior work~\cite{sowon2023role}. 
While automated biometric verification can lead to errors, particularly in non-white populations~\cite{drozdowski2020demographic}, we note that Jitambulishe has already been used successfully in Kenya~\cite{Jitambul31:online}.
(2) \emph{Privacy:} The new protocol has better privacy protections against agents, who never see the user's name or phone number. However, they do see the transaction amount; we view this as insurmountable, as the client must receive or give cash to the agent. 
(3) \emph{Cost:} The protocol incurs 1.5 extra round-trips of communication from the client to the MoMo provider. 

\mypara{Summary of Privacy and Security Analysis}

We provide a basic privacy and security analysis in Appendix \ref{app:analysis}. At a high level, the core security argument is that users can deviate from the protocol by modifying \emph{who} is withdrawing money, \emph{how much} money is withdrawn, or \emph{when} money is withdrawn. We outline various cases leading to each failure mode, and show that the protocol fails with negligible probability in each case. 
For the privacy analysis, we note that our protocol is not cryptographically private, as the agent physically sees the user when they withdraw. However, we show that the protocol's privacy leakage to the agent is limited to the mutual information between the side information observed by the agent and the identity of the user withdrawing funds. Crucially, under our protocol, this side information does \emph{not} include the user's ID or phone number.

\figureCurrentProxy


\subsection{Privacy-Preserving Delegated Withdrawal}
Prior work on user-agent interactions in Kenya and Tanzania~\cite{sowon2023role} revealed that MoMo users often involve other people when withdrawing money. 
In these \emph{delegated transactions}, a \emph{sender} (the user who wants to withdraw) sends a \emph{collector} or \emph{proxy} (e.g., a friend or relative) to withdraw cash. 
There are different ways of executing delegated transactions using existing MoMo systems. In one variant, users send a peer-to-peer transaction to the proxy, who withdraws and collects cash on the sender's behalf. This incurs an additional cost.
The sender may instead share their ID, physical phone, and PIN with the collector. The sender may also withdraw remotely using saved agent details, and send the proxy to receive the cash. 
In this scenario, agents often operate on trust; they release cash to the proxy without verifying that the sender authorized the proxy to withdraw on their behalf (see Figure \ref{current_proxy}). 

Our proposed delegated privacy-preserving withdrawal protocol (Figure \ref{fig:delegated_withdraw}) formalizes the delegated cash withdrawal process by allowing a user to initiate a transaction and assign it to a collector who will travel to an agent store to complete the process. 
Unlike the current withdrawal practice, our proposed protocol ensures that both the sender and collector are authenticated to MoMo, thus facilitating proper KYC. 

Like the base privacy-preserving withdrawal protocol (Figure \ref{fig:priv_cashout}), the first step requires user verification through biometrics (Figure \ref{fig:sender_side}). The sender then assigns the transaction to a proxy by providing their phone number. When the proxy arrives at an agent store (Figure \ref{fig:collector_side}), they initiate collection by providing the agent number, and selecting the transaction that they want to collect. After confirming the transaction by entering their own PIN, the unique codes are sent to both the collector and the agent. The agent ensures that they are giving money to the right person if both codes match. At this point, the sender receives a notification about the completed transaction; only the sender incurs transaction charges. 

\mypara{Evaluation of Design Objectives}
(1) \emph{Correctness:} The formalized delegated protocol removes guesswork and inconvenience involved in authenticating users who transact remotely. 
With both the sender and collector authenticated to MoMo, the agent has better assurance of KYC for all parties. (2) \emph{Privacy:} The protocol offers more privacy guarantees to the sender, because they do not have to share any personal information with the collector or the agent. (3) \emph{Cost:} As no formal process for delegated transactions exists,  we cannot compare the communication round trips. However, the new protocol may reduce surcharges. In the informal proxy process, users incur double charges to withdraw if they send money to the collector's wallet so that they can withdraw it.  
We extend our security and privacy analysis to this protocol in Appendix \ref{app:analysis}.



\section{Methods}
To evaluate our proposed privacy-preserving protocols, we conducted a user study with MoMo users and agents in Kenya in March 2024. 
We conducted an in-person within-subjects usability study with 32 MoMo users where we adopted a wizard-of-oz approach, with one researcher simulating the responses of the phone based on the user's interaction with our low-fidelity paper prototypes. We compared participants' privacy, security, and overall design perceptions between the comparable steps in the current workflows and our proposed ones. The comparison was qualitative rather than quantitative.
%
We also conducted an interview study with 15 MoMo agents using a process walk-through demonstrating the protocols, as opposed to a role-play format. We did this because agents are typically small business owners, and it is difficult to get them to dedicate much time to the study.    
Each demo was followed by interview questions to gather agents' perceptions of the processes. 
All interviews were conducted in a mix of English and Kiswahili, as is commonly spoken in Kenya. The interviews were conducted by a native researcher assisted by two native researchers. The data set is available at https://doi.org/10.1184/R1/29176322. 


\subsection {Study Procedure and Data Collection}
\mypara{Prototypes} We created the privacy-preserving protocols as low-fidelity prototypes using Figma and printed the finalized prototypes for the user experiments. To control for learnability, we designed our prototypes to maintain the look and feel of Safaricom's M-Pesa~\cite{jack2011mobile}, the most popular and widely used MoMo product in Kenya~\cite{MPesa–Af52:online}. 

We designed four interface variants, with combinations based on phone type (smartphone or basic phone) and language (English or Kiswahili). We showed each user the variant and language corresponding to the phone they used.
The same wording of notifications and prompts was used across variants. Due to limitations of character counts on basic phone displays, we intentionally excluded some message elements (e.g., date and time), 
allowing us 
to use the same prototypes for the whole study. 
We also created paper prototypes of the current process 
and materials that users would see if a MoMo provider implemented the new protocols, such as an ad showcasing the biometric authentication features and a video demo of the biometric enrollment process. 

\mypara{Participant Recruitment} To recruit MoMo users, we posted flyers at agent shops in five diverse neighborhoods in Nairobi: Nairobi West, Kilimani, Highrise, Umoja One, and South B. We also visited agent shops in these locations to directly recruit participants. 
In addition, we asked participants and other people in our networks to share information about our study. We advertised the study as an interview about alternative MoMo processes, without mentioning privacy, to minimize social desirability bias and priming. 
We also tried to recruit the agents who owned the shops for the agent part of our study. We aimed for a balance between male and female agents, 
as past research has shown that gender influences how people use MoMo services~\cite{chamboko2020role}. We succeeded in recruiting eight male agents and seven female agents. Table \ref{table:demographics} in Appendix~\ref{app:additional} provides a summary of participant demographics. 

We shared study information and scheduled appointments over the phone or in person. Those who agreed to participate received reminders before the scheduled interviews. 
We emphasized that participation was voluntary and that participants could opt out without any consequences. We compensated all participants for their time and transportation costs at rates approved by the Kenyan ethics board (approximately \$5.40). 

\mypara{Interviews and Tasks} We introduced the study as a product test for a fictitious company to avoid participant bias. Before starting, we briefed participants on what to expect and demonstrated the concept of thinking aloud. We also tested each participant's ability to read by asking them to complete the following unrelated tasks using a list of menu items on a screen and asking what they would select to: save for their business, identify themselves, and make a phone call to a number not saved on their phone. The menu items included in this literacy test section aren't included in the current MoMo interfaces. One participant was excluded this way. 

Thereafter, participants completed three tasks: withdraw money using the standard protocol, withdraw using the privacy-preserving protocol, and complete a delegated withdrawal with the privacy-preserving protocol (as both a sender and a collector). 
 All participants completed these tasks in the same order. Each task was followed by interview questions to elicit participant feelings and perceptions about the protocol. 
 
 We asked participants to think aloud as they completed the tasks. We recorded their audio and videos of their interactions with the paper prototypes, capturing only video of their hands for confidentiality (Figure \ref{figSetUp}). After each task, participants were asked about their experiences and perceptions of the protocols. The experiments, along with the  post-task interviews, lasted approximately 60 minutes.
\begin{figure}[t]
  \centering  \includegraphics[width=0.8\linewidth]{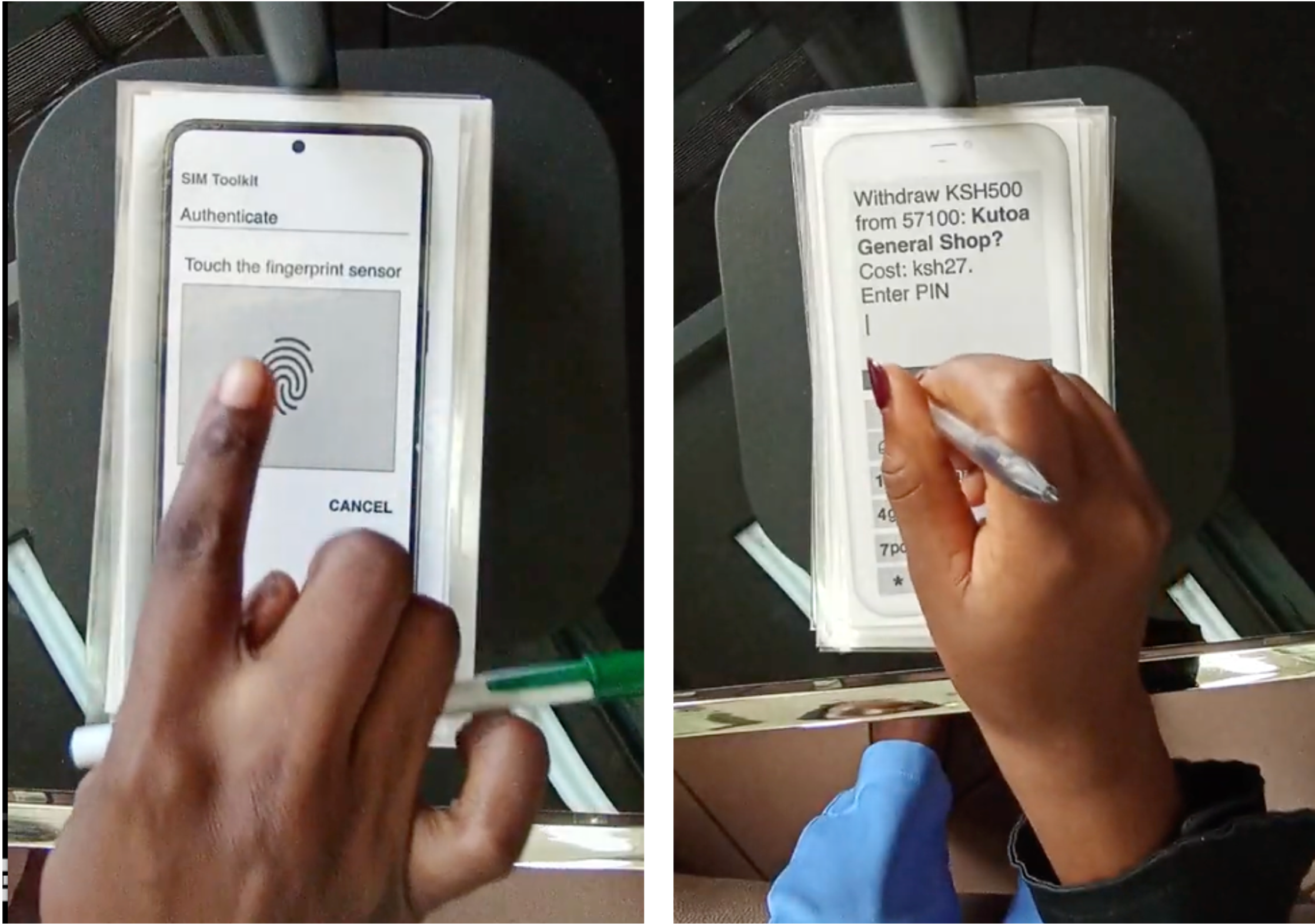}
  \caption{Users interacting with the paper prototypes}
  \label{figSetUp}
\end{figure}
The interviews with agents followed a similar flow with questions focusing on their roles as agents. These lasted 30 minutes on average. 
Both agent and customer interview scripts are in Appendices \ref{app:interview-agents} and \ref{app:interview-customers}.




 \subsection{Data Analysis} The audio-files were translated to English and transcribed before being uploaded to Nvivo v14 for coding. The coding was done by three researchers, two of whom are Kenyan natives. We used inductive analysis to develop an understanding starting from the data rather than from predetermined codes.
We had three major stages of coding. The first stage entailed development of the initial codebook where three researchers independently coded an agreed upon sample of the transcripts. They then met to discuss emerging codes, come to a shared understanding of their meanings, and develop the initial codebook. This was done iteratively until agreement was reached.

In the second stage, the coders used the initial codebook to inductively code the remaining transcripts. This was completed cross-sectionally by splitting the transcript into three sections based on the user tasks and assigning each coder a section across all the transcripts. 
The coding team met weekly to discuss the suitability of existing codes, add new codes to the codebook, and resolve any differences. 

The final stage was a peer review process, 
in which each coder double coded the whole cross section of the dataset for which  they were not the primary coder, and compared their codes with the primary coder's. This process of double coding helped identify inconsistencies, which were discussed and resolved between the primary and secondary coders.  
Since all transcripts were coded independently by two separate coders and all conflicts were discussed and resolved, there was no need to compute inter-rater reliability~\cite{mcdonald-reliability-2019}.

After all coding was completed, the primary coder reviewed the codes to combine, group, and categorize similar ideas before abstracting them to create themes and sub-themes based on the research questions 
 as summarized in Table \ref{tab:codebook} in the Appendix. The resulting themes were discussed with the wider research group and refined further. 


 \subsection{Ethics and Positionality}
\mypara{Ethical Considerations} This study was approved by two Institutional Review Boards (IRBs) in the U.S. (Carnegie Mellon University and George Washington University), and one in Kenya (Strathmore University). 
We also received a research permit from the National Council of Science and Technology (NACOSTI) in Kenya. We took several steps to protect participants. 
First, we assigned the participants a fictitious profile with a name, a phone number, and an ID to use throughout the study to avoid sharing their own PII.
Second, we sought consent to audio record and video record the hands of the participants. We planned to use the video recordings to offer additional insights into participant interaction with the prototypes. In the end, the think-aloud data sufficed. 
Although we collected participant phone numbers for follow-up and compensation, this information was stored separately and not linked with study data. Access to the audio files was limited to researchers and a transcriptionist. 


\mypara{Author Positionality}
Four of the researchers, including the lead researcher, identify as Kenyan natives and have extensively interacted with MoMo. 
This informed the design of the study and helped ensure correct interpretation of the data without losing contextual nuances. The non-Kenyan members of our team brought fresh perspectives that further enriched our protocol design and the analysis of the results.

\subsection{Study Limitations} Our study has some limitations. First, due to the qualitative nature of the study and the small sample size, we cannot make any generalizations. Second, the study may be subject to biases  such as participant self-censorship and recall bias. These may be pronounced given that participants were not engaged in actual transactions and did not have any personal information at stake. Third, our paper prototypes are likely more cumbersome to use than a digital interface, which may have impacted participant perceptions. Fourth, all participants were shown workflows in the same order, which could have introduced ordering effects. However, through our pilot testing, we found that first reviewing the existing protocol, and then proceeding to the most different proposed workflow was most effective for participants understanding the exercise. Finally, some participants were confused when role-playing both the sender and collector in the delegated protocol.   

\section{Findings}
\label{sec:Findings}
In this section, we present our results. 
Our analysis aimed to surface general themes about privacy in the context of MoMo transactions and participants' experiences with the new protocols. To avoid implying generalizability, we report our qualitative findings using the following: a few (less than 25\%), some (25-45\%), about half (45-55\%), most (55-75\%), and almost all (75-100\%). We identify user participants with the letter U and agent participants with the letter A. 

\subsection{Data Sharing Perceptions and Attitudes}
As in Sowon et al.~\cite{sowon2023role}, we found that users were concerned about their privacy in their interactions with agents. Our findings provide  richer insights regarding factors that contribute to users' data sharing attitudes with agents. 

\subsubsection{Privacy perceptions}
\label{sec:privacyperceptions}
We present the privacy insights that emerged as a four-dimensional typology capturing what is private, private from whom, risks, and risk mitigation techniques.

\mypara{What is private and why}
Most users felt that account balances and ID information were private, while almost all said their PIN is private. About half felt that their phone number was personal and very few considered their name private. To justify these perceptions, some cited the personal identifiability of the data, while others discussed how  their ID was a gateway to many pieces of information: ``If someone has my ID number, they can track my NSSF,\footnote{The National Social Security Fund allows employers and citizens to contribute to save for retirement similar to the 401K in the US.} NHIF,\footnote{The National Health Insurance Fund is a contribution-based fund to provide accessible and affordable health insurance for Kenyan citizens.} [and] my ID number links me to my bank and my family details'' (U18). 
A few users said they would consider their data as private under some conditions e.g., large transaction amount. ``[The amount is] personal depending on the money transacted. If I have withdrawn a million and someone knows, they might plan to attack and rob me'' (U22).

\mypara{Privacy from whom} Participants frequently expressed concerns about who had access to their information during transactions, with almost all users uncomfortable with agents having access to their data. In addition to sharing specific concerns (summarized in Table \ref {tab:user_concerns} in Appendix~\ref{app:additional}), a few explained that too much data was shared, and others expressed concerns about the recording of data  in a book for evidence of transactions. 
Only a few users were concerned about the MoMo providers having access to their information. The most common reason for concern by this group was a mistrust of MoMo employees who could access account balance data, potentially identifying those with large balances. 

When asked about sharing data with proxies in the current proxy engagement techniques, some users were worried about sharing personal details e.g., ID or PIN, even though most people send proxies that they trust.

\mypara{Risks that concern users} Almost all users discussed the negative outcomes that could arise from data sharing during  MoMo transactions. 
We find four main types of concerns: fraud and theft, unauthorized use of data for non-MoMo purposes, physical harm, and economic judgments. 
The most frequently mentioned harms were fraud and theft, followed by unauthorized use. While most cited the agent as the main threat actor, a few users feared that MoMo employees were involved. 
Users mentioned unauthorized uses of data by agents such as registering new SIM cards and registering voters. ``I came to learn that my ID number had registered another SIM card and I didn’t know'' (U18). Some who mentioned the risk of physical harm tied it to the agent knowing user account balances and potentially acting as accomplices to defraud them. Others mentioned stalking and harassment.``[The agent] might try calling me and stalking me, and so I wouldn’t want them to have a name and phone number to avoid that'' (U19). Finally, a few users shared that there might be a risk of socio-economic judgments due to knowing how much money one transacts or has in their account. 


\mypara{Risk mitigation techniques} 
Some users suggested potential privacy protections. 
For example, some suggested blurring or removal of parts of PII. This could include redacting portions of one's ID or phone number, which is similar to the practice of redacting all but the last four digits of the Social Security number in the United States. 
Users provided these suggestions to balance their desire for privacy with their desire to provide the data they perceived as necessary to complete transactions. 

Some users who shared data with proxies reported redacting the account balance for the shared notification message, changing their PIN when the transaction was complete, and auditing their account to ensure the proxy did not withdraw a larger amount. ``She went to the shop and came back with the money and after that I changed my PIN'' (U30).
 
\subsubsection{Factors influencing data-sharing} \label{subsec:datasharing} 
We explored participant beliefs about the data exchanged in MoMo transactions 
and find that perceived utility of shared data, perceptions of data sensitivity, trust, and obligatory compliance contribute to data-sharing acceptability.

\mypara{Perceived utility of shared data} Almost all users and agents mentioned the practical utility of sharing various pieces of information for authenticating and verifying users, monitoring transactions, and recourse. 
``I verify [who the customer is] by asking for his or her ID'' (A1).
Most agents and users mentioned scenarios where sharing data could help them follow up with each other or contact the provider in case of issues. Agents specifically relied on the data collected during the transaction process for recourse.
\begin{quote}
    [When] a person has withdrawn, he can call Safaricom, and claim he withdrew from a wrong agent and so you see when Safaricom calls me as an agent and they ask me “did you get the ID?” if you did not get the ID, Safaricom will automatically send the money back to the person. (A14)
\end{quote}

\mypara{Perceptions about data sensitivity} Many users were comfortable sharing data they felt was ``less sensitive'' for various reasons. Most commonly, users felt that data such as their name and amount were ``common knowledge'' to other people. ``Everyone calls you by your name, so it is not a secret'' (U18). A few participants were open about sharing their information because they ``had nothing to hide'' or because they believed they were anonymous to MoMo: ``I am not doing an illegal transaction like I have stolen anyone’s money, it’s very legal so I don’t have anything to hide'' (U28).

\mypara{Data sharing based on trust} A common theme was the requirement of trust when completing transactions. About half of users trust agents and providers, with MoMo providers being trusted slightly more than the agents. For instance, U22 said: ``They are the service providers offering the service. I trust them to keep my information secret.''


For delegated transactions, 
trust also played a central role in how users decided who to send, and to which agent.
Even for the privacy-preserving delegated process, where participants acknowledge the benefits of not having to share data with proxies, about half of users still cited trust as a requirement for sending someone else. Ultimately ``once the [proxy] has the money I [still] need them to actually give it to me'' (U3).

\mypara{Sharing out of obligatory compliance} Most users and agents expressed the need to share their data for compliance when completing MoMo transactions, or just as a necessary trade-off to use MoMo services: ``There are details they might need and even if they are personal the company says it’s okay to give them, I am going to give'' (U26).




Users seem to accept the requirement of sharing their information as part of the MoMo transaction process, but when shown alternative options, they understand the potential risks associated with those actions.

\subsection{Existing Process Inconveniences}

When asked if there was anything they would change about the current MoMo transaction processes, many users expressed general satisfaction, describing the process as simple. However, further questioning about their current practices revealed existing process inconveniences, which they considered normal.
The following four inconveniences were prevalent: ID-related inconveniences, balancing data-sharing and security, contextual complexity in security and privacy decision-making, and usability challenges. 

\mypara{ID-related challenges}
Many of the agents and users expressed dissatisfaction with the use of IDs in MoMo. 
Users noted usability challenges with IDs, whereas  agents felt that the use of ID was inefficient. The reasons given ranged from being cumbersome for both them and their clients, jeopardizing users' privacy, and not being fully secure. 
\begin{quote}
    [The use of IDs] is challenging and there should be a simpler way of [verification] to speed up the process \ldots [Because] if you have five people that are there on the queue waiting, the process of checking, and writing [IDs] is time consuming and maybe that person urgently needs the money.  (A10)
\end{quote}
One agent pointed out the inadequacy of IDs for security: ``We are in Kenya where people can go to [duplicate IDs] and have the same ID as someone else. These people that do fraud on M-Pesa nowadays have these fake IDs'' (A12). We note that agents' concerns are largely not fully addressed by our protocols, which still rely on the ID infrastructure.

\mypara{Inconvenience of balancing data-sharing and security}
When sending others to complete transactions, most users found themselves having to share a lot of their personal information with the proxy. Most people indicated giving the proxy either their phone and their MoMo PIN, or sharing other transaction details such as the 
notification message 
and other personal information, such as the physical ID card. ``I gave him my ID card and phone and gave him my PIN.'' (U14). U23 said: ``I will forward the message for them to show the agent the message'', while U18 explained that they did not like that they shared their data, but felt that it was their ``only option'' when they needed to send someone to collect money. 

\mypara{Contextual complexity in decision making} Both agents and users experienced contextual complexity from the current processes. For users, this mostly involved balancing factors such as the choice of agent, their location, and the transaction amount with privacy needs. On the agent's side, the complexity occurred in how they balanced the requirement for KYC with usability for their clients. 
Agents adopted more relaxed ID and transaction practices with known users or small transactions, while requiring stricter verification for unknown users, higher value transactions, or highly risky transactions like delegated withdrawals. ``[I require the ID] for new users [because] that is where there is a lot of fraud'' (A12). 

 
Other challenges manifested as process inefficiencies. For instance, when asked about how they currently manage delegated transactions, both groups shared various techniques e.g., 
ensuring that the collector bears the transaction and sender information, and offline communication between the agent and customer to hash out this informal transaction process. 
\begin{quote}
  When that person who has been sent comes I don’t give them cash first, I call that customer first and verify like `whom have you sent--in terms of their names and their number, and ID number'. After talking to him or her, then I can confirm that they have indeed sent that person. (A12)  
\end{quote}
Some agents even store user details for efficient remote and proxy transactions: ``Before you come to trust users that way, at least you have their details. All those users I have their ID’s'' (A14). 
Such processes introduce risks and inconsistencies, as agents constantly balance security with convenience. 


\subsection{Impressions of Proposed Protocols}
We find that participant preferences for the new protocols reflected perceived affordances beyond those offered by existing processes. In this section, we first highlight the specific security, privacy, and usability features that our participants identified as being advantageous in the new protocols. Next, we present usability and security issues associated with the new protocols that concerned our participants.
\subsubsection{Better privacy and security affordances}
Both groups of participants expressed preferences for privacy-preserving protocols, mentioning three categories of preferences---those related to security, privacy, and usability.

\mypara{Privacy-preserving process offers better security} Users and agents mentioned features that they felt provided better security. For delegated withdrawals, both groups felt that the verification of all parties, the transparency of the process, and the formalized way to use a proxy enhanced security.
\begin{quote}
   It is secure and procedural---the agents themselves have a way to trust it, and myself and the [collector] can trust it. Everything is documented, just in case we encounter issues, it can be traced. (U10)
\end{quote}

In addition to the delegated withdrawal features, almost all users and some agents felt that the code offered additional security. ``The code is another added layer of security. [The agent] and I are the only ones getting the code, and she asks me for the code and I tell her the same thing she has'' (U13).

Finally, the use of biometric authentication enhanced perceived security. ``I like it because I feel it is secure and someone will not be able to remove your money because they will not have your fingerprints'' (U12). 
Many agents shared similar feelings about biometrics. 
``People steal [other people's] ID's. [With] this one, [the agent is] sure [because] there is no way your fingerprint will match my fingerprint'' (A13).

\mypara{Privacy-preserving process offers better privacy} When asked what they liked or if they would use the new protocols, almost all users cited the benefit that their PII was not shared with the agent. Most also felt that the privacy-preserving protocols gave them better control over their data. 
\begin{quote}
 [I like it] that I just withdraw, and [the collector] collects for me so there is no personal information shared. I remain with my phone and my details and he uses his phone and his details \ldots there is also no way [the agent] can use my details [unlike] the first one [where they] can get my phone number and start calling and harassing me. (U9)   
\end{quote}

Agents also liked the privacy offered by the new process. A10, referring to the physical book agents maintain for transactions they complete, said: ``We have manual books and every time they are insisting we put down the [customer and transaction] details in case of anything---but this one is fully digital, and protects the privacy for the clients.'' 

\mypara{Privacy-preserving process is more usable} About half of  users and some agents specifically mentioned how the delegated withdrawal process offers a more convenient way to complete delegated transactions. 
Most users found the protocols relatively ``straightforward'' and ``easy'' to use and indicated that they would choose the new privacy-preserving processes over the existing ones. Some mentioned that there was more flexibility in choosing a proxy, mainly because the sender does not have to give someone their physical device, while others mentioned more freedom from not having to rely on known agents only, as they previously did with the delegated transactions. A few also felt that the privacy-preserving proxy process would be more affordable, since there would be no additional charges for collecting money for another person. Others appreciated the new proxy system for its increased geographic flexibility when compared to the current protocols where remote transactions are discouraged by the provider based on the distance between the customer and the agent. 

One recurring aspect that participants frequently discussed was the convenience offered by the process not requiring them to carry their IDs to transact. ``It is good because there is no need for an ID, you just use your fingerprint to transact [and] your phone [which] I always have it with me, so there is no burden. We keep forgetting to carry our ID’s so this would really help.''(U14). Two users also thought that the new authentication process would be more convenient for agents. 

Almost all agents felt the new process was more convenient for their users to authenticate. Two agents pointed out how better and efficient KYC impacted their own business success as agents. The process addresses ``the challenge of users coming to transact and they don’t have their identification cards and so you cannot allow them to transact but with this, it will be good for our business as well'' (U3). 

\label{sec:protocol challenges}

\subsubsection{Usability and privacy concerns}
Only a few users and agents expressed usability and privacy concerns, which fall in four main categories: process inconveniences, process complexity, privacy concerns related to voice biometrics, and access and accessibility concerns. 

\mypara{Inconvenient process} 
Agents and users mentioned potential inconveniences of authenticating each time with the new processes. A6 preferred the flexibility offered by IDs in some cases: ``You realize there are those users who frequently come here and you don’t need to keep authenticating their identity.'' 

Both groups also shared concerns with the new processes having many steps. Specifically, the voice biometric process requires users to dial a short code, listen to a prompt, and repeat a passphrase: ``it would be better if we had an option where you press and speak in without listening to that [prompt] \dots --- that would shorten the process.'' (U31).

 Given a choice between the existing and new processes, however, most participants who expressed dissatisfaction with the authentication process because of its length still chose the privacy-preserving process despite this inconvenience.

 A few agents and some users were concerned about the need for the proxy to have a phone or be registered. ``I cannot send my kid. Maybe he or she is a teenager without a phone.'' (U12). Other users worried this process would be restricted to literate users, since prompts are given as text on a series of screens, while some desired a batch collection option to collect money for multiple people at once.

\mypara{Complex protocols} Concerns about protocol complexity were mostly linked to a misunderstanding about the code, its expiration, whose details to enter  (sender's vs proxy's), and the two-step process completed by the sender and their proxy. 
Agents were concerned about the potential malicious use of an expired code by another user to get money from an agent. This is based on an incorrect understanding about how the code works. More importantly, both agents and users raised concerns with the code expiration window in light of potential network delays and 
and felt that these need to be ``shared very fast'' so the agent does not ``
have to keep users waiting because you are waiting for the code'' (A14).

Process confusion sentiments were mostly related to the delegated withdrawal process. Some agents described the process as ``complicated'' and ``confusing.'' For users, the confusion manifested 
during the delegated withdrawal task process 
when they had to play both the role of the sender and collector as they sometimes forgot which one they were assuming. 
\begin{quote}
I was confused---but I remembered I was [supposed to be] withdrawing, not collecting. [As] with anything new, things are a bit confusing, but with time we get used to it and it becomes easier. (U2)
\end{quote}
\mypara{Privacy concerns with voice biometrics}
One participant perceived voice biometrics as inconvenient and less private because they would have to ``shout'' every time they went to an agent, making them prefer an ID: ``I will use my ID, because you don’t have to shout around people about your password. It's quieter than this one'' (U25). 
Note that the Jitambulishe voice biometric does not require users to state sensitive information like a PIN---rather, they are asked to repeat a standard phrase, and their voice is used to authenticate. 


\mypara{Access and accessibility issues} Most agents were mostly concerned about the potential challenges with access to the privacy-preserving features as well as accessibility of the biometrics. The access issues they mentioned were mostly related to technology and demographics. These agents felt that issues such as malfunctioning phones and phone battery challenges would limit access and that older and illiterate people would struggle with the new protocols. Agents also felt that people with physical deformities would be disadvantaged and potentially excluded, especially with the use of fingerprints for verification. ``Maybe [the person] is a construction worker and their fingerprints are not quite clear and so sometimes they might try to use the fingerprint and it fails'' (A5). 

\subsubsection{Security concerns}
\label{sec:security concerns}
Agents and users alike expressed concerns about potential security issues with the new protocols. 
About half of the users believed that 
agents would need to receive additional information, including their name or phone number, to help with verification or potential recourse. About half of the agents shared similar concerns, wanting to see a customer's details in case they needed to reach out. A few agents were worried that the code for the proposed processes was not adequate to authenticate users, preferring to see the customer name. 

Another security challenge shared by both groups was the potential failure of biometric systems. 
For example, for voice biometrics, U1 asked: ``What if the voice changes when you get sick? Or when your voice is hoarse.'' 
Similarly, a few agents shared concerns about changes in one's voice. A few agents and users also felt that this system could be exploited. A13 discussed a potential distrust of users verifying themselves: ``\ldots for my security purposes, how do you really trust the customer to allow the customer to do self-service?''

\section{Discussion}
Current MoMo platforms are fraught with privacy, security, and usability challenges. These are exacerbated by reliance on agents and inconvenient, or exclusionary, KYC practices. Existing processes expose users to significant risks of unauthorized data use, while relying on traditional ID-based KYC creates usability barriers for underserved or marginalized populations. The growing reliance on MoMo services in developing economies necessitates the development of privacy-preserving protocols considering both technical constraints and human factors. To our knowledge, our study is the first to design and test privacy-preserving protocols for MoMo accounting for both external and internal threats, including risks posed by MoMo agents. Here, we discuss some takeaways. 

\subsection{Balancing privacy, usability and security}
Privacy remains an essential question in many digital financial systems, and MoMo 
is no exception---with studies highlighting various privacy issues
~\cite{reaves2017mo, bowers-emerging-digital, bowers2017regulators, munyendo-loans-2022, sowon2023role}
. 
Although privacy perceptions often influence data sharing~\cite{dinev2013information, malhotra2004internet, xu2011information}, 
we also know that system design contributes to better or poorer privacy experiences for users, which leads to our first takeaway:

\mypara{Takeaway 1: MoMo users are uncomfortable sharing KYC data with agents, but do so out of necessity} Both users 
 and agents view data sharing as a necessary part of transaction protocols, but users would prefer to share less, and agents would prefer to collect less. 
 Most user concerns about sharing data---such as fears of physical or psychological harm---have the agent as the main threat actor. 
However, systematized data collection 
processes, such as mandatory SIM registration, remain in place to comply with KYC regulations.

\mypara{Takeaway 2: Both users and agents prefer the privacy-preserving protocols for both their privacy capabilities and increased transaction security} 
Data minimization does not have to sacrifice protocol requirements. In this work, we proposed and tested one method that minimizes process-driven data-sharing, and supports users' preferences for not sharing more than required while ensuring 
reasonable efficiency and transparency.
A few participants indicated that they would have preferred more PII in the transaction notification because of its perceived utility, e.g., to provide assurance that the agent was transacting with the correct person. Even then, most of these participants felt that this information should be redacted from the agent's view. Overall, users preferred the privacy-preserving protocols because they do not share their personal information with the agent, and also provide a mechanism to securely engage the proxy without having to share PINs and phone. 
Similarly, agents expressed a preference for the new processes because they protect users' data and because they offer more security and transparency in delegated transactions, thus mitigating some of the associated risks.

\subsection{Digital identity verification and eKYC}
Digital identity verification is seeing growing adoption~\cite{singpass}. 
In digital financial services, identity verification entails establishing ``to some degree
of certainty, a relationship between a subject accessing online services and a real-life person'' to comply with KYC~\cite{NIST2024digital}. We find that MoMo users and agents are open to using biometrics for verification.

\mypara{Takeaway 3: Authentication using biometrics is feasible and preferred to ID-based verification} The use of IDs for KYC was a common inconvenience cited by both agents and users. As a result, participants almost always preferred the use of biometrics over IDs. Previous work shows that the lack of legal identification is one of the biggest barriers to financial inclusion~\cite{yongomobileidentity, martin2021exclusion}, and a critical barrier to KYC~\cite{gsmaKYC,gsmamomo2024}. While this affects MoMo users directly, it also indirectly impacts agents who benefit from the economic opportunities that MoMo creates, as one agent said: ``I prefer the new process [because] it addresses the challenge of customers coming to transact and they do not have their identification cards and so you cannot allow them to---but with this, it will also be good for our business.'' The inconveniences of IDs that users face also inconvenience the agent and present additional risks when they adopt workarounds. 
As more service providers use new digital eKYC processes to streamline KYC, our study provides a proof of concept with empirical results supporting the use of biometrics in a resource-constrained context. 

\subsection{Other Opportunities and Challenges}
While our protocols show promise in improving MoMo privacy without sacrificing security and privacy, they have some limitations. This leads to our fourth takeaway:

\mypara{Takeaway 4: New protocols are needed to accommodate delegated transactions where the proxy does not have a MoMo account or phone} 
Our findings show that participants often send people in their social circles to collect money. With the formalized delegated withdrawal, the proxy must have their own phone and be a registered MoMo user. This would be problematic in situations where phones are shared among family members or when the proxy is a minor and therefore not a registered user. Though these may not be an issue in places like Kenya where phone penetration is high~\cite{KenyaPhoneStats:online}, we believe that the impact of such a limitation on financial inclusion needs to be further understood.

\mypara{Takeaway 5: Data minimization in the new processes may present some vulnerabilities to agents} 
Some agents felt that privacy-preserving workflows left them with inadequate information for critical processes, such as recourse. Agents indicated that they might need users' personal details, such as name and phone number, to reach out in case of an issue. These scenarios included dispensing the wrong amount, especially excess cash. Those who preferred ID-based KYC indicated that it served as legal evidence in case of fraud. We thus note that agents are concerned about potential fraud they may face from malicious clients.    

\section{Recommendations}
While our protocols were largely acceptable to users and agents, there remain areas of improvement. However, this qualitative study may not be generalizable to all populations. To improve our protocols, we present five recommendations:

\mypara{Recommendation 1: Pilot the proposed protocols and test the placement of authentication in the workflow, to understand when, and why, users may not consider authentication necessary; then design alternative solutions that balance the goals of privacy, security, and usability}
A pilot of these protocols by MoMo providers would help to confirm their feasibility. We also note that, in practice, users may find the new authentication cumbersome if they have to complete it every time they use MoMo---even when completing transactions such as purchasing airtime. This experience appears to be consistent with the required use of IDs for transactions. In investigating authentication inconveniences, it would be beneficial to determine if this applies to all three types of biometric authentication or just to voice biometrics, as we found the most concern with voice authentication.  

\mypara{Recommendation 2: Study the extent and nature of the fraud targeted at agents, and ensure that MoMo workflows are designed with appropriate and adequate security protections from malicious users}
Some of the agents we interviewed reported security concerns and fraud targeting agents. 
While our privacy-preserving protocols were designed to have all the necessary protections to prevent fraud from ill-intentioned users and ensure that if fraud happens the agent has recourse, some agents  said they would still prefer to have client contact information for the sake of recourse rather than relying on the Momo provider's formal dispute-resolution process. Further work is needed to better understand agents' concerns and make sure they are adequately addressed. 

\mypara{Recommendation 3: Study the impact of limitations requiring proxies in delegated transactions to be registered users or have their own phones} It is important to continue pursuing inclusion in the context of MoMo. Previous research~\cite{luhanga2023user} has found that challenges with ID acquisition may negatively impact the registration of a SIM card in one's name. Such barriers can negatively affect the successful implementation of the proposed protocols, as users may find them more inconvenient if they cannot send the people they would typically send. Understanding the extent of such a limitation in the proposed protocols will help to make appropriate decisions about whether or how to resolve such a limitation.

\mypara{Recommendation 4: Study how to make KYC systems both privacy-and-security-preserving by design} Regulators and central bankers have consistently argued that achieving a fully private payment system would be incompatible with anti-money laundering and countering the financing of terrorism  requirements~\cite{Armelius21onthepos, Auer21CBDCquest}. This tension is especially evident in KYC where security is prioritized over privacy. This perspective potentially stifles innovations that could enhance user privacy while still meeting regulatory needs. A more nuanced approach would involve re-evaluating these assumptions to explore privacy solutions that minimize data exposure while still fulfilling identity proofing requirements. 



\mypara{Recommendation 5: Conduct large-scale surveys to determine the prevalence of privacy violations}
Our study was qualitative in nature and therefore limited in sample size, which is inadequate for statistical generalizations. Though we provide in-depth contextual insights into privacy and usability issues, topics currently understudied in the context of MoMo, a larger quantitative survey would be useful to assess the prevalence of privacy violations in countries that use MoMo.

\section*{Acknowledgments}
We would like to thank Peter Mwangi, Vema Oluoch, Lydia Kamuyu, Esther Adwets, Minnie Ogachi, and Cynthia Chepkoech for their assistance with recruitment and other study logistics.
This study was supported in part by the Gates Foundation and NSF grant SaTC-2325477. The views and opinions expressed, however, are those of the authors and do not necessarily reflect the views or positions of the sponsors. 

\bibliographystyle{plain}

\appendix

\newcommand{\mychoice}[1]{{$\circ$}~#1 \, }
\newcommand{\questionspace}{\vspace{0em}}

\section{Interview Script - Agents} \label{app:interview-agents}
\footnotesize

\noindent Thank you for participating in our study. We are in the process of testing mobile money transfer processes for a new product called MoMoPesa from a company called MoMoCom. I am going to be showing you some of their processes, and asking you for your feedback. The purpose of the study is to help us evaluate the usability of the new processes. Remember we are not evaluating you as an agent in any way. We are evaluating the new mobile money process from MoMoCom. I also want to let you know that I don’t work for MoMocom, so your feedback won’t hurt my feelings.

\noindent \textbf{Standard Transaction Process\\}
\noindent \emph{Show a demo of the standard process that the customers usually follow.}
\begin{enumerate}
    \item Is the process I have just shown to you similar to the current process that customers currently follow when they are withdrawing money?
    \item Currently when you register users and when users are using mobile money, how do you know who they are? (probe: What do you use to confirm their identity?)
    \item Why is this confirmation important/necessary?
    \item How often do you need to verify people’s identity? 
    \item What do you think about using IDs to authenticate your customers? 
    \item Is there anything you would change about the authentication process using IDs  that you use at the moment?
    \begin{enumerate}
        \item Probe: are there any difficulties or challenges you experience from using physical ID to authenticate mobile money users?
    \end{enumerate}
    \item What information is contained in the current transaction messages you receive as a summary of the customer’s transaction?
    \item Your telco provider wants to change the content of the message you receive about the customer’s transaction by removing unnecessary information. They have come to you to know the following:
    \begin{enumerate}
        \item What information is necessary for your records that you would want to keep in the SMS you receive? How is it useful?
        \item What information do you think is unnecessary, or you could do without? How is the information  useful?
    \end{enumerate}
\end{enumerate}

\noindent \textbf{Privacy Preserving Process\\}
\noindent\emph{Show privacy preserving process and show both the face/finger and voice authentication processes on the smartphone and basic phone prototypes:}
\begin{enumerate}
\setcounter{enumi}{8}
    \item In the new process you have seen, customers will not need to show their ID when transacting. They will follow the process that I have just demonstrated to register and authenticate themselves. 
    \begin{enumerate}
        \item What do you think about using such a process where customers  authenticate themselves instead of relying on IDs? 
        \item Do you see any challenges using this method of authentication?
        \item If your MoMo provider gave you  a choice either to authenticate customers using their ID or to have customers use the new process to  authenticate themselves using a selfie, their fingerprint or voice, which one would you choose? (Why?)
        \item Is there any situation where you would prefer to use IDs? 
    \end{enumerate}
    \item Is there any situation where you think customers would prefer to still use their IDs instead of authenticating themselves this way? 

\end{enumerate}

\noindent \emph{Show redacted message and also show what the customer would receive:}

\begin{enumerate}
\setcounter{enumi}{10}
    \item What do you think of this transaction message that you would receive when a customer transacts?
    \begin{enumerate}
        \item Which message would you prefer? This one or the one you currently get? Why?
    \end{enumerate}
\end{enumerate}

\noindent \textbf{Proxy-Withdraw Privacy Preserving Process}\\
\noindent \emph{Before showing the proxy-withdraw process, ask:}
\begin{enumerate}
\setcounter{enumi}{11}
    \item Has any of your customers ever transacted and sent someone else to collect the money? If yes, how do customers normally do this? If no, imagine someone (a child or an adult) comes to your shop and says they have been sent by someone who happens to be your customer who has withdrawn some money using your agent number.
    \begin{enumerate}
        \item As the agent, how do (would) you know or verify who the sender is? Does the sender need to do anything to facilitate verification e.g., calling the agent to say who they will send? Sending someone mutually known? Does the agent call the supposed customer?
        \item As the agent, how do (would) you know that the person collecting the money is the right person who was sent? 
        \item Does (would) the person who has been sent need to provide any information? (If yes, what information?) 
        \item How about the sender. Do (would) they need to provide any information? (If yes, what information?)
        \begin{enumerate}
            \item How do (would) they provide this information to you?
            \item Do (would) you store this information for future use (say when the person sending someone else is your customer) or does the customer need to provide it all the time?
            \item (If (ii) above is yes), how do you store the information (probe: What exactly do you store?)
        \end{enumerate}
        \item How does the customer get the agent number to withdraw the money given they are not at your shop when they are transacting?
        \item At what point do (would) you decide to actually give the person who has been sent the money. (Probe: What will make you confident to give this other person the money?)
    \end{enumerate}
    \item As an agent, is  there anything you like or dislike about this process where customers transact away from your kiosk and send other people to collect the money?
    \begin{enumerate}
        \item Is it possible to give money to the wrong person?
        \item Have you or another agent ever given money to the wrong person when someone says they have been sent?
    \end{enumerate}
\end{enumerate}
\noindent \emph{After showing  the proxy withdraw process, ask:}
\begin{enumerate}
\setcounter{enumi}{13}
    \item When customers are sending other people to collect money, they can either provide the person they are sending or to you specific information about the transaction, or they can use the process I have just shown you. As an agent, given the option to educate customers on one of these methods, which one would you point them to? Why? 
    \item Is there anything you like about such a MoMo  process where a customer can send someone else?
    \item Is there anything you dislike about it?
    \item Are there any challenges you foresee in such a process?
\end{enumerate}

\noindent \textbf{Other Questions}
\begin{enumerate}
\setcounter{enumi}{17}
    \item Are you aware of data privacy laws in Kenya?
    \item (If yes), What are your responsibilities as a mobile money agent as defined by these laws? 
\end{enumerate}

\section{Interview Script - Users} \label{app:interview-customers}

\noindent \textbf{Introduction}\\
\noindent Thank you for being here. We are in the process of testing mobile money transfer processes for a new mobile money product called MoMoPesa from a company called MoMoCom. I am going to be giving you some activities to do, and asking you for your feedback as you work on these tasks. The purpose of the study is to help us learn more about the usability of the new processes from MoMocom. The tasks are not in any way a test of your skills. So just do the best you can and if there are things that don’t make sense to you just let us know.\\
You will be working on a mobile phone which happens to be on paper. This is \emph{name of assisting researcher} and she will be playing as our mobile phone today and handing you the different screens based on your actions. And this is \emph{name of assisting researcher} and she will be playing the role of an agent today. So imagine that this is your phone area and your screens will be placed here. Use this pen to interact with the phone. Point to things you will normally select, and where you would normally type something, just tell me what you would type for example ``I will enter my phone number'' which is ``then say the phone number.'' MoMoPesa is the picture on your screen that looks like a wallet. So why don’t we try this:

\begin{itemize}
    \item Show me what you would select to open the MoMoPesa menu.
    \item Would you show me the option for saving for your business?
    \item Please show me now what you would select if there was an option to identify yourself on the screen.
    \item Now show me what you would do to call a number: 0770800900 that is not saved in your phone book.
\end{itemize}

\noindent \emph{[Researcher’s note: These are a test that the participant can read. Observe to make judgments on this. Only continue if participant passes tests]}

\noindent Thank you. For the purpose of the tasks today, we will assign you a name, a phone number and an ID number to use wherever these will be needed. Here is your ID. Please keep it where you other ID is. (Researchers note: Ensure the participant has kept this ID where their national ID is) These are your other details (Researcher to hand the participant the paper with their created profile). While working on the tasks you might encounter areas that we do not have a screen for, and that’s okay, we will guide you through it. As you work on your activities, please tell us what you are thinking and what’s going through your mind. For example, you might say, “I was expecting a different screen.” “I am feeling confused”  “this is different from what I am used to” or any other such thoughts that may come to your mind. All of this information is important for us to know.  Remember we are not evaluating you in any way, we are evaluating the new mobile money process from MoMoCom. I also want to let you know that I don’t work for MoMocom, so you won’t hurt my feelings if you don’t like something. During the activities, please imagine that you are transacting as you normally would in your regular life. When you finish a task, please let me know you are done. You may ask questions while doing the activity, however, I may not be able to answer all of them until the end of the session. Do you have any questions? \\
\emph{[Researcher’s note: Test that the participants understand the role play] 
Please take a look at the paper profile I gave you. Remember you are going to pretend to be this person today. Tell me, what is your name today? And \emph{fake study name?} What is your phone number? ID number?
}

\noindent Before we start, I would like to show you how I would think aloud when I want to call someone called Eko Kolipo on my phone.

\noindent \textbf{Task 1: Participant Withdraws Using Standard Current Process}\\
\noindent \textbf{Instructions and Task}\\
In this first task, you want to withdraw Ksh 500  from  an agent. Please complete this process as you would using mobile money. \emph{Name} will be your agent. Interact with them as you would when you go to an agent. \noindent \emph{[Researchers note: During each task, ask the participant to repeat the task they are required to complete. If the participant is not thinking aloud, prompt with things like: Tell me what you are thinking about right now.  What are you looking at? What are you looking for? What are you trying to decide?]}

\noindent \textbf{Interview Questions}
\begin{enumerate}
    \item What were your impressions of this process? (prompt: what makes it \emph{their response}) \emph{[Researcher’s note: Also ask about any observations that were not addressed through the think aloud: I noticed that…]}
    \item Was the procedure you just completed  similar to the process you would usually follow when cashing out mobile money? 
    \begin{enumerate}
        \item (If no) What about it was different?
        \item Anything you particularly liked about this process?
        \item Anything you did not like?
    \end{enumerate}
    \item Why do you think you are asked to provide your ID when you transact at an agent?
    \item Have you ever been inconvenienced by the need to  show your ID when withdrawing mobile money?  (If yes) How so? 
    \item What information do you think the agent receives when you transact using the procedure you just followed? (prompt with: Did the agent receive your name? Your ID? Your phone number? How much money you wanted to transfer?)
    \begin{enumerate}
        \item Is there any situation where you would not want to share any of this information with the agent? (prompt for what and why)
    \end{enumerate}
    \item What information do you think is sent to \emph{the MoMo provider} when you initiate a withdrawal using the procedure you just followed?
    \begin{enumerate}
        \item Is there any situation where you would not want to share any of this info with the MoMo provider? (prompt: what and why?)
    \end{enumerate}
    \item Have you seen a sample message the agent gets when you transact?
\end{enumerate}

\noindent \textbf{Additional Questions - After Being Shown the Agent Notification}

\noindent I will now show you a sample agent notification message of your just completed transaction. \emph{[Researcher’s note: Show the Swahili or English message depending on what language participant uses for their MoMo]} 
\begin{enumerate}
\setcounter{enumi}{7}
    \item Consider the transaction summary that the agent received. 
    \begin{enumerate}
        \item What do you think about it?
        \item Did the content of the message surprise you?
        \item (If yes) What surprised you?
        \item (If no) Why were you not surprised?
        \item Would you change the information contained in the message?
        \item (If yes) How would you change it? ( What would you add/remove?) Why?
        \item (If no) Why would you not change anything?
        \item When transacting via MoMo, would you consider any of this information personal? (name, mobile number, amount, ID number, PIN, time of Transaction, balance)?
    \end{enumerate}
    \item Is there anything you would change about the process you just used? If yes, what and why?
\end{enumerate}

\noindent \textbf{Task 2: Participant Withdraws Using Privacy-Preserving Process}\\
\noindent \textbf{Advertisement Material}\\
Thank you \emph{fake name} for completing the first task. Before we move on to the other two tasks, I will show you a sample advertisement from MoMoCom explaining some of the features of their new product MoPESA, that will provide a different way to verify your identity and also allow a person to either withdraw cash themselves, or to withdraw and send someone else to collect. \emph{[Researcher’s note: Show ad(s) and demo] 
\begin{itemize}
    \item This is the advertisement showing the different ways you can use to be identified: \emph{Show the general ad with the three identity processes.}
    \item Now, I will show you how their registration of identification would work - In your case, it would be \emph{preferred biometric option}.
\end{itemize}}

\noindent \textbf{Comprehension Questions}
\begin{itemize}
    \item What are the ways you can identify yourself from the advertisement?
\end{itemize}

\noindent \textbf{Instructions and Task 2}\\
Now that you have an idea of the process from MoMoCom we will move to the second task. Assume that you already registered your preferred biometric: face or voice. Using this next process, I'd like you to withdraw Ksh 500  from an agent. This time, you are withdrawing yourself and not sending someone else. \emph{Name} will be your agent today. Interact with them as you would whenever you go to an agent.\\ \emph{[Researchers note: Ask the participant to repeat the task 
]}

\noindent \textbf{Interview Questions}\\
Thank you for completing the last tasks. I will now ask you some questions about this process that you have just used to withdraw money.

\begin{enumerate}
\setcounter{enumi}{9}
    \item What did you think of this process? (What makes it \emph{their response})
    \item Is there anything you liked about this process compared to the process of withdrawing money that you normally follow? (prompt for how e.g., if they say it was more secure)
    \item Is there anything you disliked about this process compared to the process that you normally follow? (prompt for how/what e.g., if they say it was difficult to use)
    \item If your provider (e.g., Safaricom) offered you a choice between this process and the one you use, which one would you choose? (Why?)
    \item Why do you think you did not have to provide your ID when completing the transaction using this process?
    \item Compared to the previous process where you had to show your ID:
    \begin{enumerate}
        \item Anything you liked about the new process of being identified?
        \item Anything you disliked about the new process of being identified?
        \item If your provider gave you a choice of using either your ID, or this new process of being identified, which one would you use? 
        \begin{enumerate}
            \item Why? What makes \emph{their option} better for you? 
        \end{enumerate}
    \end{enumerate}
    \item What information do you think the agent received when you transacted using this process? (prompt: Did the agent receive your name? ID? phone number? How much money you wanted to transfer?)
    \begin{enumerate}
        \item Is there any situation where you would not want to share any of this information with the agent?
        \item (If yes) which information would you not want to share, why? 
    \end{enumerate}
    \item What information do you think is sent to the mobile money provider when you complete the transaction using this process?
    \begin{enumerate}
        \item Is there any situation where you would not want to share any of this information with the mobile money provider?
        \item (If yes) Which information would you not want to share, why?
    \end{enumerate}
\end{enumerate}

\noindent Now I will show you a sample agent notification message  that the agent received when you transacted using this process (Note: Show message and show previous one for process 1)
\begin{enumerate}
\setcounter{enumi}{17}
    \item Consider the transaction summary that I have just shown you. 
    \begin{enumerate}
        \item Did you notice any difference in this message from the one you saw earlier?  (If yes), what was different?
        \item Do you think the notification provides the agent any way to be sure he is giving the money to the right person?
        \begin{enumerate}
            \item (If yes), how so? (No), what should be included? 
        \end{enumerate}
        \item What did you like or dislike about the content of the agent notification message?
        \item Would you change the info in this notification message?
        \item (If yes) How would you change it? (What would you add/remove?) Why?
        \item (If no) Why would you not change anything about it?
        \item Which message would you prefer that the agent receives about your transaction? Why?(prompt: How so where appropriate)
    \end{enumerate}
    \item Is there anything about this whole  process you would change? (prompt: Like anything you are concerned about, or something you feel could be better?)  If yes, what and why? 
\end{enumerate}

\noindent \textbf{Task 3: Proxy Withdraw Using Privacy-Preserving Process}\\
\noindent \textbf{Pre-Task Interview}\\
Thank you for completing the second task. We have one more task, but before you start, I would like to ask some questions to help me understand how people use MoMo. Sometimes, people send others to withdraw money for them and there are many different reasons why people would do this. 

\begin{enumerate}
\setcounter{enumi}{19}
    \item Have you ever sent someone else to withdraw money or to collect money that you have withdrawn? (If yes) Please tell me how you usually go about the process. (If not) Give scenario:  Imagine you were sick and could not get to the agent and so you needed to send someone; how would you go about this?
    \begin{enumerate}
        \item How would you decide about whether you would send someone or wait to do it yourself? (prompt: amount of money, location, a new agent vs known agent, urgency, collector is the receiver?)
        \item Who would you send? 
        \item What would (or did) you have to do to ensure that the person collects the money successfully? (prompt: What would (or did) you provide to the person you are sending?)
        \begin{enumerate}
            \item Ask for all mentioned: Why would the info be useful?
            \item (For information) How would (or did) you provide  this info? (prompt: do you write down, do you call?) 
            \item Is there any situation where you would not want to share some or all of the information you mentioned with the person you were sending? 
            \item (If yes) Which information and why?
        \end{enumerate}
        \item What would (or did) you have to provide  to the agent to ensure the person you sent collects the money without any problems?
        \begin{enumerate}
            \item How would (or did) you provide this information?
            \item Is there any situation where you would not want to share some or all of the info you mentioned with the agent? 
            \item (If yes) Which information and why?
        \end{enumerate}
    \end{enumerate}
    \item Is there anything you wish you could change about the way you currently send someone?
\end{enumerate}

\noindent \emph{[Researcher’s note: I will show you one more ad from Momocom that shows the different ways you can withdraw using their new process}\\
\noindent \textbf{Comprehension Questions}
\begin{itemize}
    \item What are ways you can withdraw money from the advertisement?
\end{itemize}

\noindent \textbf{Task 3: Instructions and Task}\\
\noindent \textbf{Part 1:} This is our third and last task. During this task, you might encounter some screens where we use X to hide part of the phone number and only show the last 4 digits like this (show sample screen). Now Imagine you want to withdraw Ksh 500 and send someone else to collect it for you from the agent. I would like you to initiate the transaction. Remember to keep telling me what you are doing. I will act as the person you would have sent in real life.[My name is Mimi Koleta and this is my phone number.] \emph{Name} will continue to play the role of the agent. \emph{[Researchers note: Ask the participant to repeat the task they are required to complete.]}

\noindent \textbf{Part 2:}
Now \emph{fake name}, I would like you to imagine that I am your friend and I would like to send you to go and collect for me Ksh 1000 that I have withdrawn. I have already initiated the transaction as you just did with your own transaction a few minutes ago. Now imagine that you are already at the agent and complete the cash collection for me. \emph{[Researchers note: Ask the participant to repeat the task they are required to complete.]}

\noindent \textbf{Interview Questions}\\
Thank you. I will now ask you some questions about your experience with these two processes of sending someone and collecting for someone.
\begin{enumerate}
\setcounter{enumi}{21}
    \item What did you think of the process?
    \item Is there anything you liked about this process compared to the current one that people use  when sending someone else to collect?
    \item Anything you disliked about this process compared to the current one that you (would) use when sending someone else to collect?
    \item Is there anything you found difficult? (if yes, what and how so?)
    \item Is there anything you found confusing? (if yes, what and how so?)
    \item If you were using this process:
    \begin{enumerate}
        \item What would you consider more carefully compared to when you send someone the normal way? 
        \item What would you consider less carefully (or would be less important) compared to when you send someone the normal way
        \item Would your decision of who you send change because of using this process? (If yes) how so?
    \end{enumerate}
    \item Are there any reasons why you would not use this process? 
    \item What info did  the person you were sending need to have or see when you sent them to collect using the procedure you just followed?
    \begin{enumerate}
        \item Is there any situation where you would not want to share some or all of the information you mentioned with the person you were sending? (If yes) What and why not?
    \end{enumerate}
    \item What information do you think the agent received when you transacted using this last procedure that you just followed?
    \begin{enumerate}
        \item Is there any situation where you would not want to share any of this information with the agent? (Note: correct participant after they respond if they have a wrong perception of what was sent)
    \end{enumerate}
    \item What information do you think was sent to the mobile money provider when you transacted using this procedure?
    \begin{enumerate}
        \item Are you comfortable with the mobile money provider having access to this information? Why or why not?
        \item Is there any situation where you would not want to share any of this information with the mobile money provider?
    \end{enumerate}
    \item If you were to send someone to collect, would you want them to use this new process of collection or would you prefer the current method?Why?
    \item If you were to collect money for someone else, would you want to use this new process of collection? Why? 
    \item Overall, if you needed to withdraw money and send someone else to collect and your MoMo provider offered you this process as an option, which one would you choose – Would you use this process or would you prefer not to use this process at all?  Why?
    \item Is there anything about this process you would change?(Additional probe: Like anything you are concerned about, or something you feel could be better).  If yes, what and why?
\end{enumerate}


\section{Additional Figures}
\label{app:additional}

\begin{table}[H]
\centering
\small
\footnotesize
\caption{Participant demographics for users and agents}
\resizebox{\linewidth}{!}{
\begin{tabular}{l c c c c}
\toprule
 & \multicolumn{2}{c}{\textbf{Users}} & \textbf{Agents} \\
 & \textbf{Smartphone} & \textbf{Feature phone} & \\
\midrule
\textbf{Gender} & & & \\
Male & 10 & 7 & 8 \\
Female & 9 & 6 & 7 \\
\midrule
\textbf{Age} &  &  & \\

18-24 & 5 & 2 & 3 \\
25-34 & 10 & 7 & 6 \\
35-44 & 4 & 3 & 2 \\
45+ & 0 & 1 & 3 \\
\midrule
\textbf{Education} &  &  & \\
Below high-school & 0 & 2 & 0 \\
High-school & 6 & 6 & 8 \\
Post high-school & 3 & 5 & 7 \\
\bottomrule
\label{table:demographics}
\end{tabular}}
\label{tbl_demographics}
\end{table}


\balance
 
\figureDeposit

\figurePrivacyDelegated

\clearpage

\begin{table*}
    \begin{tabular}{p{0.08\textwidth}|p{0.2\textwidth}|p{0.2\textwidth}|p{0.2\textwidth}|p{0.22\textwidth}}
    \hline
          Party & Fraud and Theft (n=26) & Unauthorized Use (n=16) & Physical Harms (n=10) & Economic Judgments (n=6) \\
    \hline
    \hline
         Agent & PIN, ID No., Account Balance, Phone Number, Amount Withdrawn & ID No., Phone No., Name, Amount Withdrawn & Amount Withdrawn, Account Balance, Phone No., PIN & Account Balance, Amount Withdrawn \\
         Proxy & PIN, Phone & - & - & Account Balance, Amount Withdrawn \\
         Provider & PIN, ID No., Phone No., Name, Account Balance & - & - & Account Balance, Amount Withdrawn\\
    \hline
    \multicolumn{5}{c}{\textbf{(n): number of respondents who mentioned}} \\
    \hline
    \end{tabular}
    \caption{A summary of some common data concerns and the related threat actor from the perspective of users}
    \label{tab:user_concerns}
\end{table*}

\begin{table*}
    \vspace{-1cm}
    \begin{tabular}{p{0.1\textwidth}|p{0.2\textwidth}|p{0.2\textwidth}|p{0.2\textwidth}|p{0.2\textwidth}}
    \hline
          Topic & Theme & Subtheme & Example User Codes & Example Agent Codes\\
    \hline
    \hline
         Data Sharing Perceptions and Attitudes & Privacy Perceptions & What is Private and Why & ID connects to all details, Conditional sensitivity & - \\
         \cline{3-5}
         &  & Privacy from Whom? & Access by agent/provider, Sharing with proxy & - \\
         \cline{3-5}
          & & Risks That Concern Users & Economic status judgments, Fraud/theft, Physical harm, Unauthorized use by agents & - \\
          \cline{3-5}
          & & Risk Mitigation Techniques & Blur PII/info in process, Add partial ID to message & - \\
          \cline{2-5}
          & Factors Influencing Data-Sharing & Perceived Utility of Shared Data & Authentication/verification, Transaction monitoring & Required for transaction identification/completion\\
          \cline{3-5}
          & & Perceptions about Data Sensitivity & Low sensitivity, Nothing to hide, Privacy-utility tradeoffs & - \\
          \cline{3-5}
          & & Data Sharing Based on Trust & Trusted proxy/agent/provider & - \\
          \cline{3-5}
          & & Obligatory Compliance & Forced data disclosure & Provider requirement\\
    \hline
          Existing Process Inconveniences & ID-Related Challenges & Usability Challenges & Call agent beforehand, ID-dependent access & Sender-agent communication, ID-dependent KYC\\
          \cline{2-5}
          & Inconvenience of Balancing Data-Sharing and Security & - & Sharing personal identification, Change PIN after transaction, Auditing after proxy & - \\
          \cline{2-5}
          & Contextual Complexity in Decision Making & - & Age of sendee, Transaction amount, Location & Flexible ID/transaction practices, Relationship/trust\\
    \hline
          Impressions of Proposed Protocols & Better Privacy and Security Affordances & Better Security & Not sharing ID, Code expiration & Better proxy verification, Deters scammers\\
          \cline{3-5}
          & & Better Privacy & Sender controls access, Does not share personal information & Protects customer information\\
          \cline{3-5}
          & & Better Usability & Confirm transaction, Convenient for all parties & Free agent from storing customer's information, Authentication convenient for all parties\\
          \cline{2-5}
          & Usability and Privacy Concerns & Inconvenient Process & Cumbersome process, Biometric concerns, Many authentication steps & Lack simplicity, Many authentication steps\\
          \cline{3-5}
          & & Complex Protocols & Details to enter, Human error concerns, Learn new process & Many steps for proxies, Code non-expiry, Workflow confusion\\
          \cline{3-5}
          & & Concerns with Voice Biometric & Biometric concerns & Biometric concerns\\
          \cline{3-5}
          & & Access and Accessibility Issues & Restricts sender & Restricts sender\\
    \end{tabular}
    \caption{A condensed version of the codebook with themes, subthemes, and example codes for both users and agents.}
    \label{tab:codebook}
\end{table*}

	\section{Security and Privacy Analysis}
    \label{app:analysis}
    \normalsize
	In this section, we provide a basic analysis of the security and the privacy of our base privacy-preserving protocol (i.e., the protocol without delegated withdrawals). 
	
	We let $\mathcal U$ denote the set of all users, and we denote the owner of account $a$ by $\Omega(a) \in \mathcal U$. We will consider events in which a user $u$ initiates a transaction (withdrawal or deposit) from a mobile money account $a$ for amount $x$ at time $t$. 
	
	To analyze the privacy properties of our protocol, we model each person as being characterized by a set of attributes $v=\{v_1, \ldots, v_n\}$. 
	A subset of these attributes are \emph{side information} accessible to the agent $v_S \subseteq v$. 
	These could include visible attributes (e.g., hair color) as well as prior information the agent holds about people. 

	\paragraph{Assumptions}
	To formalize our guarantees, we make the following assumptions:
	\begin{enumerate}
		\item \textbf{Threat model:} 

        \noindent \emph{Agent:} We assume the agent is honest-but-curious. That is, the agent follows protocol, but tries to infer information about the user. 
        
        \noindent \emph{User:} We model a user that can violate protocol in the following ways: (1) 
        An honest user (i.e., a person who follows protocol) may lose control of their phone. This can happen either before a withdrawal is attempted, or during the withdrawal process, between the authentication and the physical withdrawal. 
        (2) A dishonest user may attempt to maliciously withdraw money. There are two main ways for a user to behave maliciously---they can attempt to withdraw from an account that is not theirs, or they can attempt to collect more cash than they withdrew on the application. 

        \noindent \emph{Collector:} In the case of delegated withdrawals, we further assume that a collector can break protocol in a number of ways. Both honest and malicious collectors can behave (or be compromised) in the same ways as the user above. 
        Specifically, for malicious collectors, they can attempt to collect a transaction that they were not authorized to collect, or they can attempt to collect more cash on a transaction they were authorized to collect. 

		\item \textbf{Security of Biometric Authentication:} We assume a user $u\in \mathcal U$ 
		can authenticate themselves (using the biometric authentication application) 
		 if and only if they are the rightful owner of the MoMo account. 
		\item \textbf{Network latency} We assume that protocol messages are transmitted instantaneously between parties.
		\item \textbf{Random Code Generation} We assume that the random code $C\in \mathbb B^{\ell}$ is generated independently of the user's identity and the agent's side information. 
		Specifically, for a random code $C$, a random user $U$, and a random side information vector $V_s$, we observe that 
		$$\mathbb P(C=c~|~U=u,V_S=v_S) = \mathbb P(C=c),$$
		and $C\sim \text{Unif}(\{0,1\}^\ell)$.
        \item \textbf{Software Integrity} We assume that the software on all devices (user, collector, agent, MoMo provider) is uncompromised and runs according to protocol. 
	\end{enumerate}

	\subsection{Security}
	We first prove that under our assumptions, incorrect withdrawals (e.g., stemming from a stolen phone or an attempt to withdraw from someone else’s account) cannot happen in the privacy-preserving withdrawal protocol. This shows that our protocol does not degrade the  integrity of withdrawals.
	
	\begin{prop}[Security of Privacy-Preserving Withdrawal]
		Consider a withdrawal event under Protocol \ref{fig:priv_cashout}, in which user $u$ initiates a withdrawal  from a MoMo account $a$ for amount $x$, and $u$ conducts the withdrawal authentication exactly once. 
		After authenticating as the owner of account $a$, user $u$ obtains a code $c$ at time $t$. 
		Next, a  user $u’\in \mathcal U$ (it is possible but not necessary that $u=u'$) requests to collect  the withdrawn money $x$ from agent $g$ at time $t’>t$, using code $c$. 
		The agent will not give $x$ to $u'$ 
		if any of the following is true, except with negligible probability\footnote{We say that a function $f:\mathbb N \to \mathbb R$ is negligible if for any constant $\gamma>0$, there exists an integer $L_\gamma$ such that for all $\ell \geq L_\gamma$, $f(\ell)<\frac{1}{\ell^\gamma}$.} in $\ell$, the length of the code $c$:
		\begin{enumerate}
			\item $u\neq u'$ and $u$ has ownership of their phone at time $t'$.
			\item $u\neq u’$ and $u$ maintains control of their phone for at least $\Delta$ time after authentication, i.e. in the interval $[t,t+\Delta]$.
			\item $u=u'$ and $u \neq \Omega(a)$---that is, the user who initiates and executes the withdrawal is not the owner of the account. 
			\item $t’ > t+\Delta$---that is, the validity period of the random code expires.
			\item $x\neq x’$---that is, the amount of money disseminated is different from the amount requested during the withdrawal request on the mobile phone.
		\end{enumerate}
		\label{prop:security}
	\end{prop}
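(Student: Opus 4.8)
The plan is to prove the statement by a structured case analysis mirroring the five enumerated conditions. The organizing observation is that, because the agent is honest-but-curious and all software runs to specification (the \emph{Threat Model} and \emph{Software Integrity} assumptions), the agent dispenses cash to $u'$ \emph{only if} the protocol's checks all pass: the code $u'$ presents equals the code $c$ that MoMo bound to the transaction, the amount matches the value MoMo associated with $c$, and the code is presented within its validity window $[t,t+\Delta]$. It therefore suffices to show that in each case at least one check fails, except with negligible probability. I would split the five cases into a \emph{deterministic} group (Cases 3, 4, 5), which fail with probability $0$ by protocol compliance, and a \emph{probabilistic} group (Cases 1, 2), which fail with overwhelming probability.

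For the deterministic group, Case 3 follows directly from the \emph{Security of Biometric Authentication} assumption: if $u=u'$ but $u\neq\Omega(a)$, then $u$ cannot authenticate as the owner of $a$, so the authentication step never completes, no code $c$ is ever issued for the transaction, and the agent has nothing to match. Case 4 follows because the honest agent enforces the validity window, so any presentation at $t'>t+\Delta$ is rejected. Case 5 follows because the honest agent dispenses only the amount MoMo bound to $c$, so a request for $x'\neq x$ fails the amount check. Each of these I would state as a one-line consequence of the agent following protocol.

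The substance lies in Cases 1 and 2, where $u\neq u'$. Here I would argue that the agent dispenses only if $u'$ presents the matching code $c$, so the collector's task reduces to obtaining $c$. The code is delivered only to $u$'s phone, to MoMo, and to the honest agent; by \emph{Software Integrity} and instantaneous, non-intercepted messaging (\emph{Network Latency}), $u'$ has no protocol channel to $c$. Under the retained-control hypotheses — $u$ holding the phone (which displays $c$ and which the agent inspects) at the collection time $t'$ in Case 1, or $u$ retaining it throughout the interval $[t,t+\Delta]$, within which any accepted collection must fall by Case 4, in Case 2 — the collector $u'$ can neither present the device bearing the matching code nor read $c$ off it while $c$ is valid. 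Hence $u'$ can do no better than guess. By the \emph{Random Code Generation} assumption each guess matches $c$ with probability $2^{-\ell}$ independently of $u'$'s side information, so over any polynomially bounded number $q(\ell)$ of attempts a union bound yields success probability at most $q(\ell)\,2^{-\ell}$, which is negligible in $\ell$.

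The main obstacle I anticipate is making the reduction in Cases 1 and 2 airtight: I must rule out \emph{every} avenue by which $u'$ could obtain $c$ other than uniform guessing. This requires carefully pinning the temporal windows — establishing that any successful collection occurs inside $[t,t+\Delta]$, so that Case 2's hypothesis indeed covers the only dangerous interval and Case 1's weaker "control at $t'$" hypothesis suffices once combined with the fact that $c$ resides only on $u$'s device — and appealing jointly to \emph{Software Integrity} (no side channels), \emph{Network Latency} (no interception), and the independence built into \emph{Random Code Generation} (so side information confers no advantage). Once these are fixed, the probabilistic bound is the routine union-bound calculation above.
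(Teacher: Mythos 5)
Your proof is correct and follows essentially the same route as the paper's: a case-by-case analysis in which cases 3--5 are discharged deterministically from the honest-agent, software-integrity, and biometric-authentication assumptions, and cases 1--2 reduce to $u'$ having to guess the uniformly random $\ell$-bit code, which succeeds with probability at most $2^{-\ell}$ by the random-code-generation assumption. The only differences are organizational: the paper additionally splits cases 4 and 5 each into a deterministic sub-case and a residual $2^{-\ell}$ guessing sub-case (for a collector presenting $c$ on their own device), whereas you fold all guessing analysis into cases 1--2 and add an explicit union bound over polynomially many attempts, which the paper omits.
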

	
	\begin{proof}
		We consider each of the above cases in succession:
		\begin{enumerate}
			\item If $u\neq u'$ and $u$ has ownership of their phone at time $t'$, then at some time in the interval $[t'-\Delta, t']$,  user $u'$ must have guessed the code $c$ and amount $x$ exactly. 
			This can happen with probability at most $2^{-\ell}$ (Assumption 4). 
			\item If $ u\neq u’$ and $u$ maintains control of their phone for at least $\Delta$ time after authentication, then there are two options. 
			Either $u'$ acquired (e.g., stole)  user $u$'s phone after time $t+\Delta$, or $u'$ inferred the code $c$. 
			If $u'$ acquired user $u$'s phone after time $t+\Delta$, the random code would have expired (at time $t+\Delta$). Hence, the agent will not disseminate funds to $u’$, by Assumption 1. 
			If $u'$ inferred the code $c$, this happens with probability at most $2^{-\ell}\}$ (see proof of Case 1).
			\item If $u=u' \neq \Omega(a)$ then $u$ incorrectly authenticated themselves as $\Omega(a)$ (i.e., this is a failure of the biometrics system).	This cannot happen by Assumption 2.
			\item Again, we have two cases. If $u'$ uses $u$'s phone to withdraw (e.g., either because $u=u'$ or because $u'$ obtained $u$'s phone at time $t'$), then if $t’ > t+\Delta$, the random code has expired on $u$'s device. The agent follows protocol (Assumption 1), and the protocol only allows the agent to disseminate funds within a time window $\Delta$ of $t$, the time when the authentication is completed. 
			The second case is that $u\neq u'$, and at some time in the interval $[t'-\Delta, t']$,  the user $u'$ obtained the code $c$ on their own device. 
			This can happen with probability at most $2^{-\ell}$. 
			\item If $x\neq x’$ and $u'$ withdraws from $u$'s phone, the agent will again not disseminate the amount because the protocol tells the agent to disseminate amount $x$ to the person whose phone displays the random code, and the agent follows protocol (Assumption 1).
			If $u'$ withdraws from a different phone but displays code $c$, this can happen with probability at most $2^{-\ell}$. 
		\end{enumerate}
		\end{proof}

        \begin{prop}[Security of Privacy-Preserving Delegated Withdrawal]
		Consider a delegated withdrawal event under Protocol \ref{fig:delegated_withdraw}, in which a sender $u$ initiates a delegated withdrawal from  account $a$ for amount $x$. User $u$ conducts the withdrawal authentication (i.e., authenticating as the owner of account $a$) exactly once. Further, $u$ designates a collector $w \in \mathcal U$ (identified by their phone number/account number $b$) to collect the withdrawal. At the time of withdrawal, collector $w$ first authenticates themselves as the owner of the account, i.e., $\Omega(b)$, and then obtains a code $c$ at time $t$. 

        Finally, a  user $u'\in \mathcal U$ (it is possible but not necessary that $u'=w$) requests to collect  the withdrawn money $x$ from agent $g$ at time $t’>t$, using code $c$. 
		The agent will not give $x$ to $u'$ 
		if any of the following is true, except with negligible probability in $\ell$, the length of the code $c$:
		\begin{enumerate}
			\item $w\neq u'$ and $w$ has ownership of their phone at time $t'$.
			\item $w\neq u’$ and $w$ maintains control of their phone for at least $\Delta$ time after authentication, i.e. in the interval $[t,t+\Delta]$.
			\item $w=u'$ and $w \neq \Omega(b)$---that is, the collector who initiates and executes the collection is not the owner of the delegated collector's account, $b$. 
			\item $t’ > t+\Delta$---that is, the validity period of the random code expires.
			\item $x\neq x’$---that is, the amount of money disseminated is different from the amount requested during the withdrawal request on the mobile phone.
                \item $w=u'$, but $u$ did not complete the delegation process to $w$ (account $b$).
		\end{enumerate}

		\label{prop:security-delegated}
	\end{prop}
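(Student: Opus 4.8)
The plan is to mirror the proof of Proposition~\ref{prop:security} as closely as possible, since the two protocols share the same code-based matching mechanism at the point of cash dissemination. The key observation is that, in the delegated setting, the collector $w$ (owner of account $b$) occupies precisely the structural role that the withdrawing user $u$ (owner of account $a$) played in the base protocol: $w$ authenticates, obtains the code $c$ at time $t$, and later presents it to the agent. Consequently, I would handle Cases 1--5 by the same arguments as before, \emph{mutatis mutandis}, substituting $w$ for $u$ and account $b$ for account $a$ throughout.

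Concretely, for Cases 1, 2, 4, and 5 I would reuse the code-guessing argument verbatim: any $u' \neq w$ who does not control $w$'s phone within the validity window $[t, t+\Delta]$ must guess both the code $c$ and the amount $x$ exactly, which happens with probability at most $2^{-\ell}$ by Assumption~4; expiry of the code (Case 4) or an amount mismatch on $w$'s own phone (Case 5) then causes the honest agent to refuse dissemination by Assumption~1. Case~3 is handled by Assumption~2 (security of biometric authentication), now applied to account $b$: since $w$ must authenticate as $\Omega(b)$ before obtaining $c$, a collector $w \neq \Omega(b)$ cannot have authenticated, so no valid code is ever issued to them.

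The genuinely new content, and the step I expect to be the main obstacle, is Case~6: $w = u'$ authenticates correctly as $\Omega(b)$, but the sender $u$ never completed the delegation of the transaction to account $b$. Here the biometric argument of Case~3 does not apply, because $w$ has legitimately authenticated as $\Omega(b)$; the probabilistic code-guessing argument is also not immediate, since $w$ is an honest owner of some account. Instead, I would argue from the protocol structure (Assumption~5, software integrity, together with the workflow of Figure~\ref{fig:delegated_withdraw}) that the provider transmits the code and amount tied to sender $u$'s withdrawal to the agent \emph{only} for the account $u$ explicitly designated during the delegation step. If $u$ never delegated to $b$, then authenticating as $\Omega(b)$ does not unlock $u$'s pending transaction, so the provider never sends $u$'s code to $w$; for $w$ to present a code matching the agent's record of $u$'s withdrawal, $w$ must therefore guess it, which again succeeds with probability at most $2^{-\ell}$ by Assumption~4, and the honest agent (Assumption~1) refuses otherwise. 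The subtlety to make precise is ruling out any indirect path---e.g., $w$ legitimately collecting a \emph{different} transaction delegated to $b$---by noting that such a code is bound to a different transaction and amount and so cannot release sender $u$'s funds $x$.
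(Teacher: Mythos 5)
Your proposal is correct and follows essentially the same route as the paper: Cases 1--5 are handled by direct analogy with Proposition~\ref{prop:security} (substituting $w$ for $u$ and account $b$ for $a$), and the delegation case is resolved by observing that the provider never issues sender $u$'s transaction code to an account that was not designated during delegation. The only cosmetic difference is that the paper closes the final case by reducing to Proposition~\ref{prop:security} (arguing $w$ would have to directly initiate a withdrawal from $u$'s account $a$), whereas you re-derive the $2^{-\ell}$ guessing bound directly; both yield the same conclusion.
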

	
	\begin{proof}
        Cases 1-4 follow the same logic as Proposition \ref{prop:security}.
        For case 5, if $u$ does not complete the delegation process, then the MoMo provider does not provide user $w$ with a delegated withdrawal transaction from $u$, by Protocol \ref{fig:delegated_withdraw}. 
        In this case, if $w$ tries to withdraw money from account $a$, it will be unable to select such a transaction from their delegated transaction list. Hence, $w$ would have to directly initiate a withdrawal from user $u$'s account $a$, which can happen only with negligible probability by Proposition \ref{prop:security}.

		\end{proof}
	
	\subsection{Privacy}

	
	Our protocol protects privacy in the sense that it promotes data minimization---it reveals less information by default to the agent than they would have otherwise seen \cite{pfitzmann2010terminology}. However, it is not private in a cryptographic sense.
	In the worst-case, an agent may be able to learn more about a user from the protocol than they would have without running the protocol. Consider the following example: 
	An agent has side information that everyone in their village is of ethnicity $E_1$. However, the agent knows that there is a very rich merchant named Mohamed of ethnicity $E_2$ who occasionally visits their village to sell goods.
	Hence, if an agent sees a person of ethnicity $E_2$ and the person executes the deposit protocol to deposit a large sum of money, the agent can infer from the side information and the transaction amount that the person is probably Mohamed.
	If the customer had not deposited a large sum, the agent would have less certainty that the person is Mohamed. 
	In this sense, the protocol leaks information.

	We can reason about the amount of information leaked by the protocol. 
	For a given agent, consider a random variable $U$ which represents a random user drawn from the set of users $\mathcal U$ according to some arbitrary distribution that is specific to the agent (in our earlier notation, $u$ is a given realization of random variable $U$). 
	For instance, if the agent lives in a remote village, the distribution over $\mathcal U$ may place much heavier probability mass on the members of that village. 
	
	Similarly, we define random variables $V_S$, representing the side information associated with a user, and $T$ representing the random transcript that the agent sees from a given execution of the protocol. 
	Intuitively, the protocol leaks information if the agent learns more about a user than they would by only observing the user's side information $V_S$. 
		\begin{prop}
			The mutual information between $U$ and the agent's view of the transcript (represented by random variable $T$), conditioned on the side information random variable $V_S$ satisfies the following:
			$$
			I(U;T,V_S) - I(U;V_S) = I(U,T|V_S).
			$$
            \label{prop:privacy}
		\end{prop}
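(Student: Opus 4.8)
The plan is to recognize the claimed identity as a direct instance of the chain rule for mutual information, so that the proof reduces to a short manipulation of (conditional) entropies. First I would expand both mutual-information terms on the left-hand side using the standard decomposition $I(X;Y) = H(X) - H(X \mid Y)$. This gives $I(U; T, V_S) = H(U) - H(U \mid T, V_S)$, where the pair $(T, V_S)$ is treated as a single jointly distributed random variable, and $I(U; V_S) = H(U) - H(U \mid V_S)$. Subtracting the second expression from the first cancels the common $H(U)$ term and leaves
\[
I(U; T, V_S) - I(U; V_S) = H(U \mid V_S) - H(U \mid T, V_S).
\]

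Next I would identify this difference as precisely the conditional mutual information, since by definition $I(U; T \mid V_S) = H(U \mid V_S) - H(U \mid T, V_S)$. This matches the right-hand side of the proposition (reading $I(U,T\mid V_S)$ as the conditional mutual information $I(U; T \mid V_S)$), completing the derivation. Equivalently, and more briefly, I would simply invoke the chain rule for mutual information, $I(U; T, V_S) = I(U; V_S) + I(U; T \mid V_S)$, and rearrange to isolate the conditional term; the entropy-based route above is just an unpacking of this same rule that makes the cancellation fully transparent.

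There is no substantive obstacle here, as the statement is a restatement of the mutual-information chain rule. The only care needed is notational and structural: treating $(T, V_S)$ as a single jointly distributed variable when forming $I(U; T, V_S)$, and confirming that all entropies are finite so the subtractions are well defined. The latter holds because the user set $\mathcal U$ and the code space $\{0,1\}^{\ell}$ are finite and the side information $V_S$ and transcript $T$ take values in finite (or otherwise well-behaved) alphabets, so every variable in sight has finite support and finite entropy.

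Alongside the calculation, I would state the interpretive payoff that motivates the proposition: the quantity $I(U; T \mid V_S)$ is exactly the \emph{additional} information the honest-but-curious agent gains about the user's identity $U$ from observing the protocol transcript $T$, over and above what the side information $V_S$ already reveals. This formalizes the earlier data-minimization discussion and the Mohamed example, making precise the sense in which the protocol's leakage is confined to the transcript's contribution conditioned on side information.
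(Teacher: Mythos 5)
Your proof is correct for the statement as written: the identity is indeed the chain rule for mutual information, and your entropy expansion followed by cancellation of $H(U)$ and recognition of $H(U\mid V_S)-H(U\mid T,V_S)$ as $I(U;T\mid V_S)$ is a complete and valid derivation. However, the paper's proof deliberately takes a longer route through the structure of the transcript, and that detour carries the substantive content of the result. After reaching $H(U\mid V_S)-H(U\mid V_S,T)$, the paper substitutes $T=(C,X)$ (the transcript consists only of the random code and the transaction amount), then invokes the independence of the code $C$ from $U$ and $V_S$ (Assumption 4) to drop $C$ from the conditioning, obtaining $H(U\mid V_S)-H(U\mid X,V_S)=I(U;X\mid V_S)$ before closing the loop back to $I(U;T\mid V_S)$. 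This extra work is what justifies the remark following the proposition: the leakage is exactly the mutual information between the user's identity and the \emph{payment amount} given side information, because the code contributes nothing. Your proof, being a pure application of the chain rule, establishes the displayed equation but does not recover this sharper characterization; your interpretive closing paragraph gestures at the data-minimization payoff but stops at ``the transcript's contribution'' rather than pinning the leakage down to the amount alone. If you want your argument to do the same work as the paper's, add the decomposition of $T$ into $(C,X)$ and the independence step for $C$.
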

		\begin{proof}
		We have that
		\begin{align}
			I(U;T,V_S) - I(U;V_S)  & = H(U) - H(U|T,V_S) - H(U) + H(U|V_S) \nonumber\\
			& = H(U|V_S) - H(U|V_S,T) \nonumber \\
			& = H(U|V_S) - H(U|C,X,V_S)  \label{eq:aux}  \\
			 &= H(U|V_S) - H(U|X,V_S)   \label{eq:indpt} \\						
						&=  I(U;T|V_S)  \label{eq:cond} 
		\end{align}
		First, we observe that in the transcript $T$, the only information sent to the agent is the random code $c$ and the transaction amount $x$ (which are realizations of random variables $C$ and $X$, respectively). 
		Any other information the agent learns can be learned without executing the protocol, and therefore falls under the side information $V_S$. 
		\eqref{eq:aux} follows because of this observation.
		\eqref{eq:indpt} follows because the code $C$ is independent of $U$ and $V_S$ (by Assumption 4). 
		Finally, \eqref{eq:cond} follows by the definition of conditional mutual information.
	\end{proof}
	This proposition says that by running the protocol, the information learned by the agent is the mutual information between the user's identity and the payment amount, conditioned on the side information. 
	This is because the code does not leak information, as it is independent of the user's identity and side information. 
	
    \noindent \textit{Remark:}
    To extend Proposition \ref{prop:privacy} to delegated withdrawal, we can replace the random variable $U$ with the joint random variable $(U,W)$, which represents the (sender, collector) pair. Under this substitution, the same proof logic from Proposition \ref{prop:privacy} holds.
    



\end{document}